\newtheorem{theorem}{Theorem}[section]
\newtheorem{lemma}[theorem]{Lemma}
\newtheorem{definition}[theorem]{Definition}
\newcommand{\Pj}{ { {\tt Pj}_{\sigma^T_i} } }
\def\CC{\mathbb C}
\def\RR{\mathbb R}
\def\ZZ{\mathbb Z}
\def\cA{\mathcal A}
\def\cB{\mathcal B}
\def\cD{\mathcal D}
\def\cI{\mathcal I}
\def\cL{\mathcal L}
\def\cN{\mathcal N}
\def\cS{\mathcal S}
\def\cX{\mathcal X}
\def\bd{\mathbf d}
\def\bzero{\mathbf 0}
\newcommand{\hide}[1]{}
\newcommand{\raf}[1]{(\ref{#1})}
\newcommand{\cP}{\ensuremath{\mathcal{P}}}
\newcommand{\cR}{\ensuremath{\mathcal{R}}}
\newcommand{\cV}{\ensuremath{\mathcal{V}}}
\newcommand{\OPT}{\ensuremath{\textsc{Opt}}}
\newcommand{\argmin}{\ensuremath{\mathrm{argmin}}}
\newcommand{\argmax}{\operatorname{argmax}}
\title{Truthful Mechanisms for Combinatorial \\AC Electric Power Allocation}
\author{\alignauthor Chi-Kin Chau,\quad Khaled Elbassioni,\quad Majid Khonji\\
             \affaddr{Masdar Institute of Science and Technology, Abu Dhabi, UAE}\\
       \email{\{ckchau,kelbassioni,mkhonji\}@masdar.ac.ae}  
}
\begin{document}

\maketitle

\begin{abstract}
Traditional studies of combinatorial auctions often only consider linear constraints (by which the demands for certain goods are limited by the corresponding supplies). The rise of smart grid presents a new class of auctions, characterized by quadratic constraints. Yu and Chau [AAMAS 13'] introduced the {\em complex-demand knapsack problem}, in which the demands are complex-valued and the capacity of supplies is described by the magnitude of total complex-valued demand. This naturally captures the power constraints in AC electric systems. In this paper, we provide a more complete study and generalize the problem to the multi-minded version, beyond the previously known $\frac{1}{2}$-approximation algorithm for only a subclass of the problem. More precisely, we give a truthful PTAS for the case $\phi\in[0,\frac{\pi}{2}-\delta]$, and a truthful FPTAS, which {\it fully} optimizes the objective function but violates the capacity constraint by at most $(1+\epsilon)$, for the case $\phi\in(\frac{\pi}{2},\pi-\delta]$, where $\phi$ is the maximum angle between any two complex-valued demands and $\epsilon,\delta>0$ are arbitrarily small constants. 
\end{abstract}

\category{J.4}{Social and Behavioral Sciences}{Economics}
\keywords{Combinatorial Power Allocation; Multi-unit Combinatorial Auctions;
Complex-Demand Knapsack Problem; Mechanism Design; Smart Grid}


\section{Introduction} \label{sec:intro} 

Auctions are vital venues for the interactions of multi-agent systems, and their computational efficiency is critical for agent-based automation. Nonetheless, many practical auction problems are combinatorial in nature, requiring carefully designed time-efficient approximation algorithms. Although there have been decades of research in approximating combinatorial auction problems, traditional studies of combinatorial auctions often only consider linear constraints. Namely, the demands for certain goods are limited by the respective supplies, described by linear constraints.

Recently, the rise of smart grid presents a new class of auction problems. One of the salient characteristics is the presence of periodic time-varying entities (e.g., power, voltage, current) in AC (alternating current) electric systems, which are often expressed in terms of complex numbers\footnote{\scriptsize In the common terminology of power systems \cite{GS94power}, the real part of complex-valued power is known as {\em active} power, the imaginary part is {\em reactive} power, whereas the magnitude is {\em apparent} power. Electric equipment has various active and reactive power requirements, whereas power transmission systems and generators are restricted by the supported apparent power.}. In AC electric systems, it is natural to use a quadratic constraint, namely the magnitude of complex numbers, to describe the system capacity. Yu and Chau  \cite{YC13CKS} introduced the {\em complex-demand knapsack problem} (CKP) to model a one-shot auction for combinatorial AC electric power allocation, which is a quadratic programming variant of the classical knapsack problem.

Furthermore, future smart grids will be automated by agents representing individual users. Hence, one might expect these agents to be self-interested and may untruthfully report their utilities or demands. This motivates us to consider truthful (aka. incentive-compatible) approximation mechanisms, in which it is in the best interest of the agents to report their true parameters. In \cite{YC13CKS} a monotone $\frac{1}{2}$-approximation algorithm that induces a deterministic truthful mechanism was devised for the complex-demand knapsack problem, which however assumes that all complex-valued demands lie in the positive quadrant. 

In this paper, we provide a complete study and generalize the complex-demand knapsack problem to the multi-minded version, beyond the previously known $\frac{1}{2}$-approximation algorithm. More precisely, we consider the problem under the framework of (bi-criteria) $(\alpha,\beta)$-approximation algorithms, which compute a feasible solution with objective function within a factor of $\alpha$ of optimal, but may violate the capacity constraint by a factor of at most $\beta$. 
We give a (deterministic) truthful $(1-\epsilon,1)$-approximation algorithm for the case $\phi\in[0,\frac{\pi}{2}-\delta]$, and a truthful $(1,1+\epsilon)$-approximation for the case $\phi\in(\frac{\pi}{2},\pi-\delta]$, where $\phi$ is the maximum angle between any two complex-valued demands and $\epsilon,\delta>0$ are arbitrarily small constants. 
Moreover, the running time in the latter case is polynomial in $1/(\epsilon\tan\delta)$ (the so-called {\it FPTAS with resource augmentation}). We complement these results by showing that, unless P=NP, neither a PTAS can exist for the latter case nor any bi-criteria approximation algorithm with polynomial guarantees for the case when $\phi$ is arbitrarily close to $\pi$. Our results completely settle the open questions in \cite{YC13CKS}.

%
%

Because of the paucity of space, some proofs are deferred to the extended paper. 

\section{Related Work} \label{sec:related} 

Linear combinatorial auctions can be formulated as variants of the classical knapsack problem \cite{CK00,KPP10book,FC84alg}. Notably, these include the {\em one-dimensional knapsack problem} ({\sc 1DKP}) where each indivisible item has only one single copy, and its multi-dimensional generalization, the {\em $m$-dimensional knapsack problem} ({\sc $m$DKP}).
There is an FPTAS for {\sc 1DKP} \cite{KPP10book}. 

In mechanism design setting, where each customer may untruthfully report her utility and demand, it is desirable to design {\it truthful} or  {\it incentive-compatible} approximation mechanisms, in which it is in the best interest of each customer to reveal her true utility and demand \cite{DN07}.
In the so-called {\it single-minded case}, a {\it monotone} procedure can guarantee incentive compatibility \cite{NRTV07}.  While the straightforward FPTAS for  {\sc 1DKP} is not monotone, since the scaling factor involves the maximum item value, \cite{BKV05KS} gave a monotone FPTAS, by performing the same procedure with a series of different scaling factors irrelevant to the item values and taking the best solution out of them. Hence, {\sc 1DKP} admits an truthful FPTAS. More recently, a truthful PTAS, based on dynamic programming and the notion of the so-called {\it maximal-in-range} mechanism, was given in \cite{DN10} for the {\it multi-minded} case.      

As to {\sc $m$DKP} with $m\geq 2$, a PTAS is given in \cite{FC84alg} based on the integer programming formulation, but it is not evident to see whether it is monotone.     
On the other hand, {\sc 2DKP} is already inapproximable by an FPTAS unless P = NP, by a reduction from {\sc equipartition} \cite{KPP10book}.  Very recently, \cite{KTV13} gave a truthful FPTAS with $(1+\epsilon)$-violation for multi-unit combinatorial auctions with a constant number of distinct goods (including {\sc $m$DKP}), and its generalization to the multiple-choice version, when $m$ is fixed. Their technique is based on applying the VCG-mechanism to a rounded problem. Based on the PTAS for the multi-minded {\sc 1DKP} developed in \cite{DN10}, they also obtained a truthful PTAS for the  multiple-choice multidimensional knapsack problem.  

In contrast, non-linear combinatorial auctions were explored to a little extent.  Yu and Chau  \cite{YC13CKS} introduced complex-demand knapsack problem, which models auctions with a quadratic constraint.

\section{Problem Definitions and Notations}\label{sec:model}

\subsection{Complex-demand Knapsack Problem}

We adopt the notations from \cite{YC13CKS}. 
Our study concerns power allocation under a capacity constraint on the magnitude of the total satisfiable demand (i.e., apparent power). Throughout this paper, we sometimes denote $\nu^{\rm R} \triangleq {\rm Re}(\nu)$ as the real part and $\nu^{\rm I} \triangleq {\rm Im}(\nu)$ as the imaginary part of a given complex number $\nu$. We also interchangeably denote a complex number by a 2D-vector as well as a point in the complex plane. $|\nu|$ denotes the magnitude of $\nu$.

We define the single-minded complex-demand knapsack problem ({\sc CKP}) with a set $\cN=[n]\triangleq\{1,\ldots,n\}$ of users as follows: 
\begin{eqnarray}
\textsc{(CKP)} \qquad& \displaystyle \max_{x_k \in \{0, 1 \}} \sum_{k\in\cN}u_k x_k \label{CKP}\\
\text{subject to}\qquad & \displaystyle \Big|\sum_{k\in \cN}d_k x_k\Big| \le C. \label{C1}
\end{eqnarray}
where $d_k = d_k^{\rm R} + {\bf i} d_k^{\rm I} \in\CC$ is the {\em complex-valued} demand of power for the $k$-th user, $C \in\RR_+$ is a real-valued capacity of total satisfiable demand in apparent power. 
Evidently, {\sc CKP} is also NP-complete, because the classical 1-dimensional knapsack problem ({\sc 1DKP}) is a special case. 

We note that the problem is invariant, when the arguments of all demands are shifted by the same angle. 
Without loss of generality, we assume that one of the demands, say $d_1,$ is aligned along the positive real axis, and define a class of sub-problems for {\sc CKP}, by restricting the maximum phase angle (i.e., the argument) that any other demand makes with $d_1$. In particular, we will write {\sc CKP}$[\phi_1,\phi_2]$ for the restriction of problem {\sc CKP} subject to $\phi_1 \le \max_{k \in \cN}{\rm arg}(d_k)$ $\le \phi_2$, where ${\rm arg}(d_k)\in [0,\pi]$ is the angle that $d_k$ makes with $d_1$. We remark that in realistic settings of power systems, the active power demand is positive (i.e., $d_k^{\rm R} \ge 0$), but the power factor (i.e., $\frac{d^{\rm R}_k}{|d_k|}$) is bounded by a certain threshold \cite{NEC}, which is equivalent to restricting the argument of complex-valued demands. 

From the computational point of view, we will need to specify how the inputs are described. Throughout the paper we will assume that each of the demands is given by its real and imaginary components, represented as rational numbers.

\subsection{Non-single-minded Complex Knapsack Problem}
In this paper, we extend the single-minded {\textsc CKP} to general {\it non-single-minded} version, and then we apply the well-known {\it VCG-mechanism}, or equivalently the framework of {\it maximal-in-range} mechanisms \cite{NR07}. The non-single-minded version is defined as follows. As above we assume a set $\cN$ of $n$ users: user $k$ has a valuation function $v_k:\cD\to\RR_+$ over a (possibly infinite) set of demands $\cD\subseteq\CC$. We assume that $\bzero\in\cD$, $v_k(\bzero)=0$ for all $k\in\cN$, and w.l.o.g., $|d|\le C$ for all $d\in\cD$. We further assume that each $v_k$ is {\it monotone} with respect to a partial order "$\preceq$" defined on the elements of $\CC$ as follows: for $d,f\in\CC$, $d\succeq f$ if and only if
{\small
$$
|d^{\rm R}|\ge |f^{\rm R}|,|d^{\rm I}|\ge |f^{\rm I}|,{\rm sgn}(d^{\rm R}) = {\rm sgn}(f^{\rm R}),{\rm sgn}(d^{\rm I}) = {\rm sgn}(f^{\rm I}).
$$
}
(We assume $\bzero\preceq d$ for all $d\in\cD$.) Then for all $k\in\cN$, the monotonicity of $v_k(\cdot)$ means that $v_k(d)\ge v_k(f)$ whenever $d\succeq f$.

The non-single-minded problem can be described by the following program (in the variables $d_k$): 
\begin{eqnarray}
\textsc{(NsmCKP)}&\displaystyle \max  \sum_{k\in\cN}v_k (d_k) \label{(CV-nsm-KS)}\\
\text{s.t.} & \displaystyle (\sum_{k\in\cN} d_k^{\rm R})^2+ (\sum_{k\in\cN}d_k^{\rm I})^2\le C^2 \label{nsm-CV1}\\
& d_k\in\cD \text{ for all }k\in \cN. \label{nsm-CV2}
\end{eqnarray}

Of particular interest is the {\it multi-minded} version of the problem ({\sc MultiCKP}), defined as follows. Each user $k\in\cN$ is interested only in a {\it polynomial-size} subset of demands $D_k\subseteq \cD$ and declares her valuation only over this set. Note that the multi-minded problem can be modeled in the form \textsc{(NsmCKP)} by assuming w.l.o.g. that $\bzero\in D_k$, for each user $k\in\cN$, and defining the valuation function $v_k:\cD\to\RR_+$ as follows: 
\begin{equation}\label{mm-val}
v_k(d)=\max_{d_k\in D_k}\{v_k(d_k):~d_k\preceq d \}.
\end{equation}
We shall assume that the demand set of each user lies completely in one of the quadrants, namely, either $d^{\rm R}\ge 0$ for all $d\in D_k$, or $d^{\rm R}< 0$ for all $d\in D_k$.  
Note that the single-minded version (which is \textsc{CKP}) is special case, where $|D_k|=1$ for all $k$.   

We will write {\sc MultiCKP}$[\phi_1,\phi_2]$ for the restriction of the problem  subject to $\phi_1 \le \phi \le \phi_2$ for all $d \in \cD$ where $\phi\triangleq\max_{d\in\cD}{\rm arg}(d)$ (and as before we assume ${\rm arg}(d)\ge 0$). 

\subsection{Multiple-choice Multidimensional Knapsack Problem}\label{MCMDKS-sec}
To design truthful mechanisms for \textsc{NsmCKP}, it will be useful to consider the {\it multiple-choice multidimensional knapsack} problem (\textsc{Multi-$m$DKP}) defined as follows, where we assume more generally that $\cD\subseteq\RR_+^m$ and a {\it capacity vector} $c\in\RR_+^m$ is given. As before, a valuation function for each user $k$ is given by \raf{mm-val}. 
An {\it allocation} is given by an assignment of a demand $d_k=(d_k^1,...,d_k^m)\in\cD$ for each user $k$, so as to satisfy the $m$-dimensional capacity constraint $\sum_{k\in\cN}d_k\le c$. The objective is to find an allocation $\bd=(d_1,\ldots,d_n)\in\cD^n$ so as to maximize the sum of the valuations $\sum_{k\in\cN}v_k(d_k)$. The problem can be described by the following program:

\begin{eqnarray}
\textsc{(Multi-$m$DKP)} & \displaystyle \max\sum_{k\in\cN}v_k (d_k)& \label{(mCmD-KS)}\\
\text{ s.t. } & \displaystyle \sum_{k\in\cN} d_k &\le  c \label{mCmD-1}\\
& d_k\in D_k,&~~~~\forall k\in\cN.  \label{mCmD-3}
\end{eqnarray}

\subsection{Approximation Algorithms}

We present an explicit definition of approximation algorithms for our problem.
Given a feasible allocation $\bd=(d_1,\ldots,d_n) \in \cD^n$ satisfying \raf{nsm-CV1}, we write $v(\bd)\triangleq\sum_{k\in\cN}v_k(d_k)$.
Let $ \bd^\ast$ be an optimal allocation of \textsc{NsmCKP} (or (\textsc{MultiCKP})) and $\OPT \triangleq v(\bd^\ast)$ be the corresponding total valuation. We are interested in polynomial time algorithms that output an allocation that is within a factor $\alpha$ of the optimum total valuation, but may violate the capacity constraint by at most a factor of $\beta$:  
\begin{definition}
For $\alpha\in(0,1]$ and $\beta\ge 1$, a bi-criteria $(\alpha,\beta)$-approximation to \textsc{NsmCKP} is an allocation $(d_k)_{k} \in \cD^n$ satisfying 
\begin{eqnarray}
& \displaystyle \Big|\sum_{k\in\cN}d_k\Big| \le \beta \cdot C \label{C1'}\\
\text{such that}\qquad & \displaystyle \sum_{k\in\cN}v_k(d_k) \ge \alpha \cdot \OPT.
\end{eqnarray}
Similarly we define an $(\alpha,\beta)$-approximation to \textsc{MultiCKP}. 
\end{definition}
In particular, a {\em polynomial-time approximation scheme} (PTAS) is a $(1-\epsilon,1)$-approximation algorithm for any $\epsilon>0$.  The running time of a PTAS is polynomial in the input size for every fixed $\epsilon$, but the exponent of the polynomial may depend on $1/\epsilon$.  
An even stronger notion is a {\em fully polynomial-time approximation scheme} (FPTAS), which requires the running time to be polynomial in both input size and $1/\epsilon$. 
In this paper, we are interested in an FPTAS in the {\it resource augmentation model}, which is a $(1, 1+\epsilon)$-approximation algorithm for any $\epsilon>0$, with the running time being polynomial in the input size and $1/\epsilon$. We will refer to this as a $(1,1+\epsilon)$-FPTAS.

\subsection{Truthful Mechanisms}
This section follows the terminology of \cite{NRTV07}.
We define truthful (aka. incentive-compatible) approximation mechanisms for our problem. We denote by $\cX\subseteq\cD^n$ the set of {\it feasible allocations} in our problem (\textsc{NsmCKP} or \textsc{Multi-mDKP}). 

\begin{definition}[Mechanisms]\label{d3}
Let $\cV\triangleq\cV_1\times\cdots\times\cV_n$, where $\cV_k$ is the set of all possible valuations of agent $k$. 
A mechanism $(\cA,\cP)$ is defined by an allocation rule $\cA:\cV\to\cX$ and a payment rule $\cP:\cV\to\RR^n_+$. We assume that the utility of player $k$, under the mechanism, when it receives the vector of bids $v\triangleq(v_1,\ldots,v_n)\in\cV$, is defined as $U_k(v)\triangleq\bar v_k(d_k(v))-p_k(v)$, where $\cA(v)=(d_1(v),\ldots,d_n(v)),$ and $\cP(v)=(p_1(v),\ldots,p_n(v))$ and $\bar v_k$ denotes the true valuation of player $k$.   
\end{definition}
Namely, a mechanism defines an allocation rule and payment scheme, and the utility of a player is defined as the difference between her valuation over her allocated demand and her payment.

\begin{definition}[Truthful Mechanisms]\label{d4}
A mechanism is said to be {\it truthful} if for all $k$ and all $v_k\in\cV_k$, and $v_{-k}\in\cV_{-k}$, it guarantees that $U_k(\bar v_k,v_{-k})\geq U_k(v_k,v_{-k})$. 
\end{definition}
Namely, the utility of any player is maximized, when she reports the true valuation. 
\begin{definition}[Social Efficiency]\label{d41}
A mechanism is said to be {\it $\alpha$-socially efficient} if for any $v\in\cV$, it returns an allocation $\bd\in\cX$ such that the  total valuation (also called {\it social welfare}) obtained is at least an $\alpha$-fraction of the optimum: $v(\bd)\ge\alpha \cdot 
\OPT$. 
\end{definition}
As in \cite{NR07,DN10,KTV13}, our truthful mechanisms are based on using {\it VCG payments} with {\it Maximal-in-Range} (MIR) allocation rules:
\begin{definition}[MIR]\label{d5}
An allocation rule $\cA:\cV\to\cX$ is an MIR, if there is a range $\cR\subseteq\cX$, such that for any $v\in\cV$, $\cA(v)\in\argmax_{\bd\in\cR}v(\bd)$.
\end{definition}
Namely, $\cA$ is an MIR if it maximizes the social welfare over a fixed ({\it declaration-independent}) range $\cR$ of feasible allocations. It is well-known (and also easy to prove by a VCG-based argument) that an MIR, combined with VCG payments (computed with respect to range $\cR$), yields a truthful mechanism. If, additionally, the range $\cR$ satisfies: $\max_{\bd\in\cR}v(\bd)\ge\alpha\cdot \max_{\bd\in\cX}v(\bd)$, then such a mechanism is also $\alpha$--socially efficient.    

Finally a mechanism is {\it computationally efficient} if it can be implemented in polynomial time (in the size of the input).

\section{A Truthful PTAS for {\sc MultiCKP}$[0,\frac{\pi}{2}-\delta]$}
Problem \textsc{Multi-$m$DKP} was shown in \cite{KTV13} to have a $(1-\epsilon)$-socially efficient truthful PTAS in the setting of {\it multi-unit auctions with a few distinct goods}, based on generalizing the result for the case $m=1$ in \cite{DN10}. We explain this result first in our setting, and then use it in Sections~\ref{sec:ptas} and~ \ref{sec:truthful-ptas} to derive a truthful PTAS for {\sc MultiCKP}$[0,\frac{\pi}{2}-\delta]$. We remark that, without the truthfulness requirement, our PTAS works even for $\delta=0$. However, we are only able to make it truthful for any given, but arbitrarily small, constant $\delta>0$.  Removing this technical assumption is an interesting open question.    

\subsection{A Truthful PTAS for \textsc{Multi-$m$DKP}}
\label{sec:Tm-mC-KS}

Let $c=(c^1,\ldots,c^m)$ be the capacity vector, and for any $d\in\cD\subseteq\RR^m_+$, write $d_k=(d^1_k,\ldots,d^m_k)$. For any subset of users $N\subseteq\cN$ and a partial selection of demands $\bar\bd=(d_k\in\cD:~k\in N)$, such that $\sum_{k\in N}d_k\le c$, define the vector $b_{N,\bar\bd}=(b_{N,\bar\bd}^1,\ldots,b_{N,\bar\bd}^m)\in\RR^m_+$ as follows
\begin{equation}\label{bdT}
b_{N,\bar\bd}^i=\frac{c^i-\sum_{k\in N}d_k^i}{(n-{|N|})^2}.
\end{equation}
Following \cite{NR07,KTV13}, we consider a restricted range of allocations defined as follows: 
\begin{equation}\label{range}
\cS\triangleq\bigcup_{\stackrel{N\subseteq\cN,~\bar\bd=(d_k:~k\in N):~|N|\le\frac{m}{\epsilon},}{d_k\in\cD~\forall k\in N}}\cS_{N,\bar\bd},
\end{equation}
where, for a set $N\subseteq\cN$ and a partial selection of demands $\bar\bd=(\bar d_k\in \cD:~k\in N)$,
{\small
\begin{align*}
\cS_{N,\bar\bd}&\triangleq\Big\{(d_1,\ldots, d_n)\in\cD^n~|~\sum_{k\in\cN}d_k\le c, d_k=\bar d_k\ \forall k\in N,\\
&\forall k\not\in N~\forall i\in[m]~\exists r_k^i\in\ZZ_+ \mbox{\ s.t.\ }  d_k^i=r_k^i\cdot b_{N,\bar\bd}^i\\
&\text{ and } \sum_{k\not\in N}r_k^i\le (n-|N|)^2~\Big\}.
\end{align*}}
\hspace{-0.05in}Note that the range $\cS$ {\it does not} depend on the declarations $D_1,\ldots,D_n$. The following two lemmas establish that the range $\cS$ is a good approximation of the set of all feasible allocations and that it can be optimized over in polynomial time. The first lemma is essentially a generalization of a similar one for multi-unit auctions in \cite{DN10}, with the simplifying difference that we do not insist here on demands to be integral. The second lemma is also a generalization of a similar result in \cite{DN10}, which was stated for the multi-unit auctions with a few distinct goods in \cite{KTV13}. 
\begin{lemma}[\cite{DN10}]\label{l1-}
$\max_{\bd\in\cS}v(\bd)\ge(1-\epsilon)\OPT.$
\end{lemma}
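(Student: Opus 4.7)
The plan is to construct, starting from an optimal \textsc{Multi-$m$DKP} allocation $\bd^{*}=(d^{*}_{1},\ldots,d^{*}_{n})$, an allocation lying in the range $\cS$ whose total valuation is at least $(1-\epsilon)\OPT$. The natural strategy is to fix a small ``heavy'' index set $N$ at the optimum and round the remaining ``light'' users onto the grid of~\raf{bdT}.

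First, I would pick $N\subseteq\cN$ to be the set of the $\lfloor m/\epsilon\rfloor$ users attaining the largest valuations $v_{k}(d^{*}_{k})$ in $\bd^{*}$ (breaking ties arbitrarily); if $n\le m/\epsilon$, take $N=\cN$ and the conclusion is immediate because $\bd^{*}\in\cS_{\cN,\bd^{*}}\subseteq\cS$. With $\bar\bd=(d^{*}_{k})_{k\in N}$ and $b^{i}=b^{i}_{N,\bar\bd}$, I would record the identity $(n-|N|)^{2}b^{i}=c^{i}-\sum_{k\in N}d^{*,i}_{k}$. The candidate allocation $\bd$ keeps $d_{k}=d^{*}_{k}$ for $k\in N$ and, for $k\notin N$, rounds each coordinate down to the grid, $d^{i}_{k}=\lfloor d^{*,i}_{k}/b^{i}\rfloor\cdot b^{i}$, so $d_{k}\preceq d^{*}_{k}$. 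Verifying $\bd\in\cS_{N,\bar\bd}$ is routine: each $r^{i}_{k}=\lfloor d^{*,i}_{k}/b^{i}\rfloor$ is a non-negative integer; the identity above together with the feasibility of $\bd^{*}$ gives $\sum_{k\notin N}r^{i}_{k}\le (c^{i}-\sum_{k\in N}d^{*,i}_{k})/b^{i}=(n-|N|)^{2}$; and $d_{k}\preceq d^{*}_{k}$ yields $\sum_{k\in\cN}d_{k}\le c$.

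The crux will be bounding the valuation loss $\sum_{k\notin N}\bigl(v_{k}(d^{*}_{k})-v_{k}(d_{k})\bigr)$ by $\epsilon\OPT$. I would combine two ingredients. The first is a uniform per-user cap: by the choice of $N$, every $k\notin N$ satisfies $v_{k}(d^{*}_{k})\le\min_{k'\in N}v_{k'}(d^{*}_{k'})\le\OPT/|N|\le\tfrac{\epsilon}{m}\OPT$. The second is a coordinate-wise charging argument in the spirit of~\cite{DN10}: every user whose valuation actually drops must waste strictly positive capacity in at least one coordinate, and only $(n-|N|)^{2}$ grid slots are available per coordinate. Combining these caps across $m$ coordinates is what dictates the size $|N|=\lfloor m/\epsilon\rfloor$ and closes out the loss bound.

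The main obstacle will be precisely this last charging step. Multi-minded valuations produced by~\raf{mm-val} are step functions on $\cD$, so a rounding error of magnitude strictly less than $b^{i}$ in a single coordinate can in principle send $v_{k}(d_{k})$ all the way to zero, making the per-user loss discontinuous. Carrying out the accounting in the manner of~\cite{DN10}, adapted to $m$ dimensions, will require combining the per-user cap $\tfrac{\epsilon}{m}\OPT$ with a pigeonhole bound on how many users can simultaneously have their valuation destroyed in a single coordinate; the factor of $m$ in $|N|$ is precisely what lets these two estimates compose into the desired $\epsilon\OPT$ bound on the aggregate loss.
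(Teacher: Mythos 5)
Your setup---taking $N$ to be the $\lfloor m/\epsilon\rfloor$ highest-value users at the optimum, so that every remaining user contributes at most $\tfrac{\epsilon}{m}\OPT$---is the right starting point, but the rounding direction is wrong, and the gap you flag in your last paragraph is not a technicality that a pigeonhole argument can close: it is fatal to the rounding-down approach. Under the multi-minded valuations \raf{mm-val}, each $v_k$ is a step function on $\cD$: if $d_k^{*}\in D_k$ is not itself a grid point, then $\lfloor d_k^{*,i}/b^i\rfloor\cdot b^i<d_k^{*,i}$ in some coordinate pushes the rounded demand strictly below $d_k^{*}$ in the partial order, and the valuation of the rounded demand can drop to $0$. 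Nothing prevents this from happening to \emph{every} user outside $N$ simultaneously, so the aggregate loss of rounding down can be $\sum_{k\notin N}v_k(d_k^{*})$, i.e.\ essentially all of $\OPT$; the ``pigeonhole bound on how many users can have their valuation destroyed in a single coordinate'' that you hope for does not exist.

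The argument of \cite{DN10} (which the paper cites rather than reproves) goes the other way. Round each light user \emph{up}: $d_k^i=\lceil d_k^{*,i}/b^i\rceil\cdot b^i$. By monotonicity of $v_k$ with respect to $\preceq$---exactly what the extension \raf{mm-val} is engineered to provide---no valuation is lost at all. The price is feasibility: coordinate $i$ is overshot by at most $(n-|N|)b^i$, i.e.\ the integers $r_k^i$ now sum to at most $(n-|N|)^2+(n-|N|)$ instead of $(n-|N|)^2$. Feasibility is restored by \emph{deleting} users rather than by bounding per-user loss: for each violated coordinate $i$, averaging over the $n-|N|$ light users shows some user $j$ has $r_j^i\ge (n-|N|)+1$, which exceeds the overshoot, so setting $d_j=\bzero$ repairs coordinate $i$ (and only helps the others). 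At most $m$ users are deleted in total, each worth at most $\tfrac{\epsilon}{m}\OPT$ by the choice of $N$, giving the $(1-\epsilon)$ guarantee. This deletion step---not a coordinate-wise charging of rounding errors---is the reason $|N|$ must be of size $m/\epsilon$ rather than $1/\epsilon$.
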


\begin{lemma}[\cite{DN10,KTV13}]\label{l2-}
We can find $\bd^*\in\argmax_{\bd\in\cS}v(\bd)$ using dynamic programming in time $\left|\bigcup_{k}D_k\right|^{O(m/\epsilon)}$.
\end{lemma}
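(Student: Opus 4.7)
My approach is to reduce the optimization over $\cS$ to an outer enumeration of the parameters $(N,\bar\bd)$ defining each piece $\cS_{N,\bar\bd}$, combined with an inner dynamic program on the scaled grid inside that piece.

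First I would enumerate every subset $N\subseteq\cN$ with $|N|\le m/\epsilon$ and every partial selection $\bar\bd=(\bar d_k)_{k\in N}$ with $\bar d_k\in D_k\cup\{\bzero\}$. Restricting $\bar d_k$ to $D_k\cup\{\bzero\}$ rather than the (possibly infinite) set $\cD$ is without loss of generality: by \raf{mm-val}, any $\bar d_k\in\cD$ satisfies $v_k(\bar d_k)=v_k(d)$ for some $d\in D_k\cup\{\bzero\}$ with $d\preceq\bar d_k$, and replacing $\bar d_k$ by such $d$ only frees capacity, landing us in a piece $\cS_{N,\bar\bd'}$ already in our enumeration. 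Writing $D\triangleq|\bigcup_{k}D_k|$, the number of such pairs is at most $\binom{n}{\le m/\epsilon}\cdot D^{m/\epsilon}=(nD)^{O(m/\epsilon)}$.

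Second, for each feasible pair (i.e., one with $\sum_{k\in N}\bar d_k\le c$) I compute the grid step $b=b_{N,\bar\bd}$ via \raf{bdT} and set $M=n-|N|$. Optimizing over $\cS_{N,\bar\bd}$ then amounts to choosing $d_k=(r_k^1 b^1,\ldots,r_k^m b^m)$ for each $k\notin N$ so as to maximize $\sum_{k\notin N}v_k(d_k)$ subject to $\sum_{k\notin N}r_k^i\le M^2$ in every coordinate $i\in[m]$. The key efficiency step is to restrict the candidate grid points for user $k$ to $\bzero$ and to the \emph{rounded-up} demands $\tilde d\triangleq(\lceil d^i/b^i\rceil b^i)_{i=1}^m$ for $d\in D_k$. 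Indeed, in any grid allocation, if $v_k(d_k)$ is realized via \raf{mm-val} by some $d\in D_k$ with $d\preceq d_k$, then $\tilde d\preceq d_k$ (since $\tilde d$ is the componentwise-smallest grid point dominating $d$, and $d_k$ is itself a grid point dominating $d$) and $v_k(\tilde d)\ge v_k(d)=v_k(d_k)$, so replacing $d_k$ by $\tilde d$ preserves the value while consuming no more capacity. A standard DP over states $(k,s^1,\ldots,s^m)$ with $s^i\in\{0,\ldots,M^2\}$, recording the cumulative multiplier used by the first $k$ users of $\cN\setminus N$ and branching over at most $|D_k|+1\le D+1$ candidates at each user, then solves the piece in time $n\cdot (M^2+1)^m\cdot D=n^{O(m)}\cdot D$.

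Multiplying the enumeration bound by the inner DP cost and absorbing polynomial factors in $n$ and $m$ (using $m/\epsilon\ge m$) yields the claimed total $D^{O(m/\epsilon)}$. The main obstacle I anticipate is the rigorous correctness argument for restricting the DP transitions to rounded-up candidates: it requires carefully exploiting the step-function character of the multi-minded valuations combined with the monotonicity of ``$\preceq$'' in \raf{mm-val}, and verifying that the grid rounding $d\mapsto\tilde d$ neither moves us out of $\cS_{N,\bar\bd}$ nor sacrifices any value even when several elements of $D_k$ are simultaneously dominated by the optimal grid point $d_k$.
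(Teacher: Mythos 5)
Your proof is correct and follows essentially the same route as the argument in \cite{DN10,KTV13} that the paper cites for this lemma: enumerate the pieces $\cS_{N,\bar\bd}$ with $\bar d_k$ drawn from the declared sets, then run a dynamic program over cumulative grid multipliers $(s^1,\ldots,s^m)\in\{0,\ldots,(n-|N|)^2\}^m$ with candidate moves restricted to $\bzero$ and the rounded-up declared demands. The only imprecision is in your outer reduction to $\bar d_k\in D_k\cup\{\bzero\}$: replacing $\bar d_k$ by a dominated $d$ changes $b_{N,\bar\bd}$ to a coarser grid, so the old inner allocation does not literally lie in $\cS_{N,\bar\bd'}$ --- but your own rounding-up argument (with the observation that the new multipliers $\lceil d^i/b'^i\rceil$ are no larger than the old ones) repairs this immediately.
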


It follows that an allocation rule defined as an MIR over range $\cS$ yields a $(1-\epsilon)$-socially efficient truthful mechanism for \textsc{Multi-$m$DKP}.


\subsection{A PTAS for {\sc MultiCKP}$[0,\frac{\pi}{2}]$} \label{sec:ptas}
We now apply the result in the previous section to the multi-minded complex-demand knapsack problem, when all agents are restricted to report their demands in the positive quadrant. We begin first by presenting a PTAS without strategic considerations; then is shown in the next section how to use this PTAS within the aforementioned framework of MIR's to obtain a truthful mechanism.

In this section  we assume that ${\rm arg}(d) \le \frac{\pi}{2}$, that is, $d^{\rm R} \ge 0$ and $d^{\rm I} \ge 0$ for all $d\in\cD$. 
As we shall see in Section~\ref{sec:tbp}, it is possible to get a $(1,1+\epsilon)$-approximation by a reduction to the \textsc{Multi-$2$DKP} problem. We note further that although there is a PTAS for {\sc $m$DKP} with constant $m$ \cite{FC84alg}, such a PTAS cannot be directly applied to {\sc MultiCKP}$[0,\frac{\pi}{2}]$ by polygonizing the circular feasible region for {\sc MultiCKP}$[0,\frac{\pi}{2}]$, because one can show that such an approximation ratio is at least a constant factor. This is the case, for instance, if the optimal solution consists of a few large (in magnitude) demands together with many small demands, and it is not clear at what level of accuracy we should polygonize the region to be able to capture these small demands. To overcome this difficulty, we first guess the large demands, then we construct a grid (or a lattice) on the remaining part of the circular region, defining a polygonal region in which we try to pack the maximum-utility set of demands. The latter problem is easily seen to be a special case of the {\sc Multi-$m$DKP} problem. The main challenge is to choose the granularity of the grid small enough to well-approximate the optimal, but also large enough so that the number of sides of the polygon, and hence $m$ is a constant only depending on $1/\epsilon$.  

Without loss of generality, we assume $\epsilon<\frac{1}{4}$ where $\frac{1}{\epsilon}\in\ZZ_+$.
For an integer $i\in\ZZ_+$, let $\cL_1(i)$ and $\cL_2(i)$, respectively, denote the sets of all horizontal and all vertical lines in the complex plane that are at (non-negative) distances, form the real and imaginary axes, which are integer multiples of $\frac{C}{2^i}$, that is,
\begin{eqnarray*}
\cL_1(i)\triangleq\{x+{\bf i}y\in \CC~|~x=\frac{\lambda C}{2^i},~\lambda\in \ZZ_+\},\\
\cL_2(i)\triangleq\{x+{\bf i}y\in \CC~|~y=\frac{\lambda C}{2^i},~\lambda\in \ZZ_+\},
\end{eqnarray*}

Given a feasible set of vectors $T\subseteq \cD$ to \textsc{MultiCKP$[0,\frac{\pi}{2}]$} (that is, $\left|\sum_{d\in T}d\right|\le C$), define $d_T\triangleq\sum_{d \in T} d$, and let
\begin{equation}\label{wT}
w^{\rm I}_T \triangleq \sqrt{C^2 -{\rm Re}(d_T)^2} - {\rm Im}(d_T),~ w^{\rm R}_T \triangleq \sqrt{C^2 -{\rm Im}(d_T)^2} - {\rm Re}(d_T).
\end{equation}
Let $\rho_1(T)$ and $\rho_2(T)$ be the smallest integers such that 
$$
\frac{C}{2^{\rho_1(T)}}\le\frac{\epsilon w_T^{\rm R} }{4} \text{ and } \frac{C}{2^{\rho_2(T)}}\le\frac{\epsilon w_T^{\rm I}}{4}. 
$$
The set of lines in $\cL_1(\rho_1(T))\cup\cL_2(\rho_2(T))$ define a grid on the feasible region at ``vertical and horizontal levels'' $\rho_1(T)$ and  $\rho_2(T)$, respectively.
 
Let $\lambda_1(T)$ and $\lambda_2(T)$ be the largest integers such that
$$
d_T^{\rm R}\ge \frac{\lambda_1(T) C}{2^{\rho_1(T)}}~\text{ and }~d_T^{\rm I}\ge \frac{\lambda_2(T) C}{2^{\rho_2(T)}},
$$
and $z_T\in\CC$ be the intersection of the two lines corresponding to $\lambda_1(T)$ and $\lambda_2(T)$: 
$$
z_T\triangleq\{x+{\bf i}y\in \CC~|~x=\frac{\lambda_1(T) C}{2^{\rho_1(T)}}\}\cap\{x+{\bf i}y\in \CC~|~y=\frac{\lambda_2(T) C}{2^{\rho_2(T)}}\}.
$$
Given $z_T$, we define four points in the complex plane $({\pi'}_T^1,\pi_T^1,\pi_T^2,{\pi'}_T^2)$ such that
{\small
\begin{eqnarray*}
& {\pi'}_T^1 = \Big(0, \sqrt{C^2 -{\rm Re}(z_T)^2} \Big), ~ \pi_T^1 = \Big({\rm Re}(z_T),\sqrt{C^2 -{\rm Re}(z_T)^2}\Big), \\
& {\pi'}_T^2 = \Big(\sqrt{C^2 -{\rm Im}(z_T)^2}, 0\Big), ~ \pi_T^2 = \Big(\sqrt{C^2 -{\rm Im}(z_T)^2},{\rm Im}(z_T)\Big).
\end{eqnarray*}}
Let $\cR_T$ be the part of the feasible region dominating $z_T$:
\begin{equation}
\cR_T\triangleq\{x+{\bf i}y\in\CC~:~|x+{\bf i}y|\leq C,~x\ge {\rm Re}(z_T), y\ge {\rm Im}(z_T)\},
\end{equation}
and $P_T(\epsilon)$ be the set of intersection points\footnote{For simplicity of presentation, we will ignore the issue of finite precision needed to represent intermediate calculations (such as the square roots above, or the intersection points of the lines of the gird with the boundary of the circle).}
between the grid lines in $\cL_1(\rho_1(T))\cup\cL_2(\rho_2(T))$ and the boundary of $\cR_T$:
$$
P_T(\epsilon)\triangleq\{z\in\cR_T~:~|z|= C\}\cap(\cL_1(\rho_1(T))\cup\cL_2(\rho_2(T))).
$$
The convex hull of the set of points $P_T(\epsilon)\cup\{{\pi'}_T^1,\pi_T^1,\pi_T^2,{\pi'}_T^2, \bzero\}$ defines a polygonized region, which we denote by $\cP_T(\epsilon)$ and its size (number of sides) by $m_T(\epsilon)$ (see Fig.~\ref{f1} for an illustration).   

\begin{lemma}\label{l-size}
$ m_T(\epsilon) \leq \frac{18}{\epsilon}+3$. 
\end{lemma}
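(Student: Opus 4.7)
The plan is to count the vertices of $\cP_T(\epsilon)$ directly. The first step is a reduction: each of $\pi_T^1$ and $\pi_T^2$ already lies in $P_T(\epsilon)$, since both lie on the circle $|z|=C$ and inside $\cR_T$, and the $x$-coordinate ${\rm Re}(z_T)=\lambda_1(T)\,C/2^{\rho_1(T)}$ of $\pi_T^1$ puts it on a grid line in $\cL_1(\rho_1(T))$ (symmetrically for $\pi_T^2$). Hence the convex-hull generating set reduces to $\{\bzero,{\pi'}_T^1,{\pi'}_T^2\}\cup P_T(\epsilon)$, giving $m_T(\epsilon)\le 3+|P_T(\epsilon)|$.

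Next, since the arc $\{|z|=C\}\cap\cR_T$ is monotone in both coordinates in the first quadrant, each vertical (resp.\ horizontal) grid line meets it in at most one point. Writing $w\triangleq\sqrt{C^2-{\rm Im}(z_T)^2}-{\rm Re}(z_T)$ and $h\triangleq\sqrt{C^2-{\rm Re}(z_T)^2}-{\rm Im}(z_T)$ for the extent of $\cR_T$ from its corner $z_T$, a direct enumeration of admissible grid indices bounds $|P_T(\epsilon)|$ by $w\cdot 2^{\rho_1(T)}/C+h\cdot 2^{\rho_2(T)}/C+2$. Combined with the minimality of $\rho_i(T)$ (which forces $C/2^{\rho_1(T)}>\epsilon w_T^{\rm R}/8$ and $C/2^{\rho_2(T)}>\epsilon w_T^{\rm I}/8$), this yields $|P_T(\epsilon)|<8w/(\epsilon w_T^{\rm R})+8h/(\epsilon w_T^{\rm I})+2$.

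The crux is to prove $w\le(1+\epsilon/2)\,w_T^{\rm R}$ (and symmetrically $h\le(1+\epsilon/2)\,w_T^{\rm I}$). Splitting $w-w_T^{\rm R}=[\sqrt{C^2-{\rm Im}(z_T)^2}-\sqrt{C^2-(d_T^{\rm I})^2}]+[d_T^{\rm R}-{\rm Re}(z_T)]$ and applying the mean value theorem to $f(x)\triangleq\sqrt{C^2-x^2}$, whose derivative magnitude $x/\sqrt{C^2-x^2}$ is monotonically increasing on $[0,C)$, bounds the first bracket by $(d_T^{\rm I}-{\rm Im}(z_T))\cdot d_T^{\rm I}/\sqrt{C^2-(d_T^{\rm I})^2}$. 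After dividing by $w_T^{\rm R}$ and using the identity $\sqrt{C^2-(d_T^{\rm I})^2}=w_T^{\rm R}+d_T^{\rm R}$, together with $d_T^{\rm R}-{\rm Re}(z_T)\le\epsilon w_T^{\rm R}/4$ and $d_T^{\rm I}-{\rm Im}(z_T)\le\epsilon w_T^{\rm I}/4$, the remaining task is to prove
\[
d_T^{\rm I}\sqrt{C^2-(d_T^{\rm R})^2}+d_T^{\rm R}\sqrt{C^2-(d_T^{\rm I})^2}\le C^2.
\]
This inequality is the main obstacle: the MVT factor $d_T^{\rm I}/\sqrt{C^2-(d_T^{\rm I})^2}$ can be arbitrarily large, so the two summands cannot be bounded separately---the cancellation depends on the curvature of the arc. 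I plan to handle it by the substitution $d_T^{\rm R}=C\sin\alpha$, $d_T^{\rm I}=C\sin\beta$ with $\alpha,\beta\in[0,\pi/2]$; the feasibility constraint $|d_T|\le C$ becomes $\sin^2\alpha+\sin^2\beta\le 1$, equivalent to $\alpha+\beta\le\pi/2$, and the LHS collapses to $C^2\sin(\alpha+\beta)\le C^2$.

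Putting everything together, $m_T(\epsilon)<3+16(1+\epsilon/2)/\epsilon+2=16/\epsilon+13$. Under the paper's standing assumption $\epsilon<1/4$ with $1/\epsilon\in\ZZ_+$ (so $1/\epsilon\ge 5$, i.e., $10\le 2/\epsilon$), this promotes to $m_T(\epsilon)\le 18/\epsilon+3$, as claimed.
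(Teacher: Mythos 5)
Your proof is correct and follows the same counting skeleton as ours: observe that $\pi_T^1,\pi_T^2$ are already grid--arc intersection points, reduce to counting the lines of $\cL_1(\rho_1(T))\cup\cL_2(\rho_2(T))$ that cross the arc $\{z\in\cR_T:|z|=C\}$ (at most one point each), and bound that count by (extent of the arc)$\cdot 2^{\rho_i(T)}/C+1$ combined with the minimality of $\rho_i(T)$, which gives $C/2^{\rho_1(T)}>\epsilon w_T^{\rm R}/8$ and $C/2^{\rho_2(T)}>\epsilon w_T^{\rm I}/8$. The substantive difference is how the horizontal extent $w=\sqrt{C^2-{\rm Im}(z_T)^2}-{\rm Re}(z_T)$ of the arc is compared to $w_T^{\rm R}$. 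Our (deferred) proof asserts $w\le w_T^{\rm R}+C/2^{\rho_1(T)}$ directly ``by the definition of $z_T$,'' which accounts only for $d_T^{\rm R}-{\rm Re}(z_T)$ and silently drops the contribution $\sqrt{C^2-{\rm Im}(z_T)^2}-\sqrt{C^2-(d_T^{\rm I})^2}$ caused by rounding the imaginary coordinate down to the grid; near the top of the circle the slope factor $d_T^{\rm I}/\sqrt{C^2-(d_T^{\rm I})^2}$ is unbounded, so this term is not a priori negligible. You correctly identify this as the crux and close it: the term is at most $\frac{\epsilon w_T^{\rm I}}{4}\cdot\frac{d_T^{\rm I}}{\sqrt{C^2-(d_T^{\rm I})^2}}$, and showing this is at most $\frac{\epsilon w_T^{\rm R}}{4}$ reduces, after substituting $w_T^{\rm I}=\sqrt{C^2-(d_T^{\rm R})^2}-d_T^{\rm I}$ and $\sqrt{C^2-(d_T^{\rm I})^2}=w_T^{\rm R}+d_T^{\rm R}$, to $d_T^{\rm I}\sqrt{C^2-(d_T^{\rm R})^2}+d_T^{\rm R}\sqrt{C^2-(d_T^{\rm I})^2}\le C^2$, which your substitution $d_T^{\rm R}=C\sin\alpha$, $d_T^{\rm I}=C\sin\beta$ collapses to $\sin(\alpha+\beta)\le 1$. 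This yields $w\le(1+\frac{\epsilon}{2})w_T^{\rm R}$ in place of our stated $(1+\frac{\epsilon}{4})w_T^{\rm R}$, and your final tally $16/\epsilon+13\le 18/\epsilon+3$ (valid since $1/\epsilon\ge 5$ under the standing assumption $\epsilon<\frac14$, $\frac{1}{\epsilon}\in\ZZ_+$) still lands within the claimed bound. In short: same strategy, but your version supplies the justification for the one step our write-up glosses over, and the trigonometric inequality is the right way to make that step rigorous.
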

\begin{figure}[!htb]
	\centering
	
	\hspace{-30pt}
	\begin{subfigure}{.5\textwidth}  
		\centering
		\includegraphics[scale=0.7]{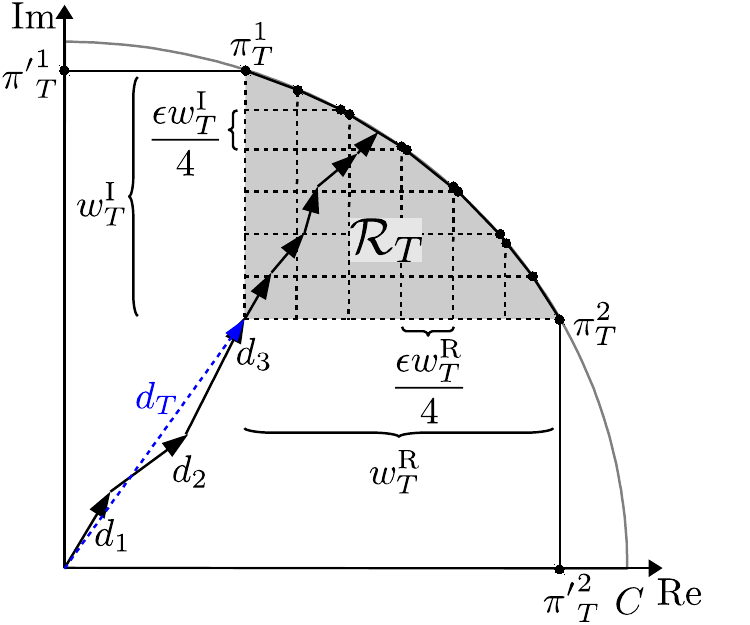}
		\caption{We illustrate the region $\cR_T$ by the shaded area and $P_T(\epsilon)$ by the black dots.}
		\label{f1}
		\end{subfigure}
		\hspace{5pt}
	\begin{subfigure}{.5\textwidth}  
		\centering
		\includegraphics[scale=0.7]{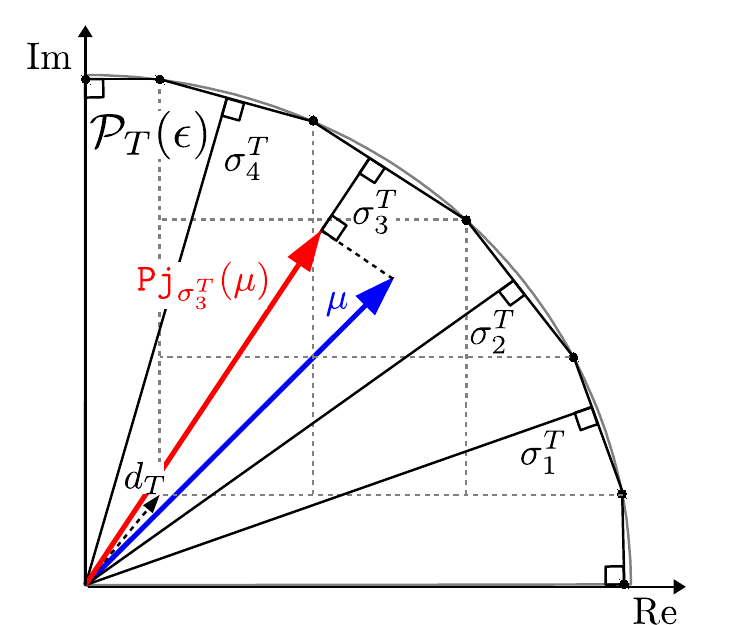}

		\caption{Each in $\{\sigma_T^i\}$ is a vector (starting at the origin) perpendicular to each boundary edge of $\cP_T(\epsilon)$. }

		\label{f2}
	\end{subfigure}
\end{figure}
	%

\begin{definition}
Consider a subset of users $N\subseteq \cN$ and a feasible set $T\triangleq\{\overline d_k:k\in N\}$ to {\sc MultiCKP}$[0,\frac{\pi}{2}]$. We define an approximate problem ({\sc PGZ}$_T$) by polygonizing \textsc{MultiCKP $[0,\frac{\pi}{2}]$}:
\begin{eqnarray*}
\textsc{(PGZ$_T$)} \qquad& \displaystyle \max  \sum_{k\in\cN}v_k (d_k) \label{PGZ}\\
\text{s.t.}\qquad & \displaystyle \sum_{k\in \cN}d_k \in \cP_T(\epsilon) \label{CPGZ}\\
\qquad & d_k=\overline d_k,~~~~\forall k\in N\label{DPGZ}\\ 
\qquad & d_k\in\cD, ~~~~\forall k\in\cN\backslash N.
\end{eqnarray*}
\end{definition}

Given two complex numbers $\mu$ and $\nu$, we denote the projection of $\mu$ on $\nu$ by ${\tt Pj}_\nu(\mu) \triangleq \frac{\nu}{|\nu|^2}(\mu^{\rm R}\nu^{\rm R} +  \mu^{\rm I}\nu^{\rm I})$. Given the convex hull $\cP_T(\epsilon)$, we define a set of $m_T(\epsilon)$ vectors $\{ \sigma_T^i\}$, each of which is perpendicular to each boundary edge of $\cP_T(\epsilon)$ and starting at the origin (see Fig.~\ref{f2} for an illustration).  

\begin{definition}
Consider a subset of users $N\subseteq \cN$ and a feasible set $T\triangleq\{\overline d_k:k\in N\}$ to {\sc MultiCKP}$[0,\frac{\pi}{2}]$. We define a \textsc{Multi-$m$DKP} problem based on $\{ \sigma_T^i\}$:

\begin{eqnarray}
&\textsc{(Multi-$m$DKP$\{ \sigma_T^i\}$)}\qquad \displaystyle \max  \sum_{k\in\cN}v_k (d_{k}) \label{(mDKP)}\\
\text{s.t.}& \displaystyle \sum_{k\in\cN} | \Pj(d_k) | \le  |\sigma_T^i|, \quad \forall i= 1,\ldots, m_T(\epsilon), \label{mCm-1}\\
\qquad & d_k=\overline d_k,~~~~\forall k\in N\label{mCm-3}\\ 
\qquad & d_k\in\cD, ~~~~\forall k\in\cN\backslash N. \label{mCm-2}
\end{eqnarray}
\end{definition}

\begin{lemma}\label{lem-proj}
Given a feasible set $T$ to \textsc{MultiCKP$[0,\frac{\pi}{2}]$},  {\sc PGZ}$_T$ and {\sc Multi-$m$DKP}$\{ \sigma_T^i\}$ are equivalent.
\end{lemma}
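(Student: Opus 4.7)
The plan is to show that the constraints in $\textsc{PGZ}_T$ (namely $\sum_{k\in\cN}d_k\in\cP_T(\epsilon)$) are exactly captured by the linear projection constraints \raf{mCm-1}. Since the two programs have the same objective function and share the constraints $d_k=\overline d_k$ for $k\in N$ and $d_k\in\cD$ for $k\notin N$, establishing this will give the claim.

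First, I would exploit the fact that $\cP_T(\epsilon)$ is a convex polygon containing the origin, and hence admits a half-plane representation $\cP_T(\epsilon)=\bigcap_{i=1}^{m_T(\epsilon)}H_i$ where $H_i=\{z\in\CC:\langle z,n_i\rangle\le b_i\}$, with $n_i$ the unit outward normal to the $i$-th edge and $b_i\ge 0$ the distance from the origin to the line containing that edge. By construction, $\sigma_T^i$ is the perpendicular from the origin to the $i$-th edge, so $\sigma_T^i=b_i n_i$ and in particular $|\sigma_T^i|=b_i$ and $n_i=\sigma_T^i/|\sigma_T^i|$ whenever $b_i>0$. (Edges containing the origin correspond to $b_i=0$; these are precisely the two edges on the coordinate axes, and since every $d\in\cD$ lies in the positive quadrant, the associated half-plane constraints $d^{\rm R}\ge 0$ and $d^{\rm I}\ge 0$ are automatically satisfied and can be ignored.)

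Next, I would unfold the definition of the projection:
\begin{equation*}
\Pj(d_k)=\frac{\sigma_T^i}{|\sigma_T^i|^2}\bigl(d_k^{\rm R}(\sigma_T^i)^{\rm R}+d_k^{\rm I}(\sigma_T^i)^{\rm I}\bigr)=\frac{\langle d_k,\sigma_T^i\rangle}{|\sigma_T^i|^2}\sigma_T^i,
\end{equation*}
so $|\Pj(d_k)|=|\langle d_k,\sigma_T^i\rangle|/|\sigma_T^i|=|\langle d_k,n_i\rangle|$. Now comes the key observation that makes the absolute values disappear: for the nontrivial edges (those with $b_i>0$) of a convex polygon that lies entirely in the closed first quadrant and contains the origin, the outward normal $n_i$ has both components nonnegative. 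Since every $d_k\in\cD$ also has nonnegative real and imaginary parts, $\langle d_k,n_i\rangle\ge 0$, and summing yields
\begin{equation*}
\sum_{k\in\cN}|\Pj(d_k)|=\sum_{k\in\cN}\langle d_k,n_i\rangle=\Bigl\langle\sum_{k\in\cN}d_k,n_i\Bigr\rangle.
\end{equation*}
Hence the constraint \raf{mCm-1}, namely $\sum_k|\Pj(d_k)|\le|\sigma_T^i|=b_i$, is precisely $\sum_k d_k\in H_i$. Taking the conjunction over all $i=1,\dots,m_T(\epsilon)$ recovers the single constraint $\sum_k d_k\in\bigcap_i H_i=\cP_T(\epsilon)$, matching \raf{CPGZ}.

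The only subtlety, and the one step that requires care, is justifying that each nontrivial outward normal $n_i$ of $\cP_T(\epsilon)$ lies in the (closed) first quadrant; this is what allows the absolute values on the left-hand side of \raf{mCm-1} to be dropped and the sum over $k$ to be pulled inside the inner product. This follows because $\cP_T(\epsilon)\subseteq\{z:z^{\rm R},z^{\rm I}\ge 0,|z|\le C\}$ and the origin is a vertex of $\cP_T(\epsilon)$: every supporting hyperplane of $\cP_T(\epsilon)$ at a boundary edge not incident to the origin must separate the polygon from the open exterior of the first quadrant in a direction of nonnegative inner product with $(1,0)$ and $(0,1)$, forcing $n_i\in\RR_+^2$. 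With this geometric fact in hand the equivalence between the feasible regions of the two programs is immediate, and since their objectives and per-user constraints coincide, the programs themselves are equivalent.
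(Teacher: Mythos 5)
Your proof is correct and takes the same route the paper intends: the paper dismisses this lemma in one line as "following from the convexity of $\cP_T(\epsilon)$," and your argument is exactly the fleshed-out version of that — writing the convex polygon as an intersection of half-planes normal to the $\sigma_T^i$ and using the nonnegativity of the demands and of the nontrivial outward normals to turn $\sum_k|\Pj(d_k)|$ into $\langle\sum_k d_k,\sigma_T^i/|\sigma_T^i|\rangle$. Your explicit handling of the two degenerate edges through the origin is a welcome detail the paper glosses over.
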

Lemma~\ref{lem-proj} follows straightforwardly from the convexity of the polygon $\cP_T(\epsilon)$.

Our PTAS for \textsc{MutliCKP$[0,\frac{\pi}{2}]$} is described in Algorithm {\sc MultiCKP-PTAS}, which enumerates every subset partial selection $T$ of at most $\frac{1}{\epsilon}$ demands, then finds a near optimal allocation for each polygonized region $\cP_T(\epsilon)$ using the PTAS of {\sc Multi-$m$DKP} from Section~\ref{sec:Tm-mC-KS}, which we denote by {\sc Multi-$m$DKP-PTAS}$[\cdot]$.
\begin{algorithm}[!htb]
	\caption{ {\sc MultiCKP-PTAS}$(\{v_k,D_k\}_{k\in\cN},C,\epsilon)$} \label{CKP-PTAS}
\begin{algorithmic}[1]
\Require Users' multi-minded valuations $\{v_k,D_k\}_{k\in \cN}$; capacity $C$; accuracy parameter $\epsilon$
\Ensure $(1-3\epsilon)$-allocation $(\widehat{d}_1,\ldots,\widehat d_n)$ to \textsc{MultiCKP$[0,\frac{\pi}{2}]$}
\State $(\widehat{d}_1,\ldots,\widehat d_n) \leftarrow (\bzero,\ldots,\bzero)$
\For{each subset $N\subseteq \cN$ and each subset $T=(\overline d_k\in D_k:k\in N)$ of size at most $\frac{1}{\epsilon}$ s.t. $\big|\sum_{d\in T}d\big|\le C$}\label{ss1}
  \State Set $d_T \leftarrow \sum_{d\in T}d$, and define the corresponding vectors $\{ \sigma_T^i\}$
  \State Obtain $(d_1,\ldots,d_n) \leftarrow$ {\sc Multi-$m$DKP-PTAS} [{\sc Multi-$m$DKP}$\{ \sigma_T^i\}$] within accuracy $\epsilon$ \label{s1} 
  \If{$\sum_kv_k(\widehat{d}_k) < \sum_kv_k(d_k)$}
  \State $(\widehat{d}_1,\ldots,\widehat d_n) \leftarrow (d_1,\ldots,d_n)$
  \EndIf
\EndFor
\State \Return $(\widehat{d}_1,\ldots,\widehat d_n)$
\end{algorithmic}
\end{algorithm}

\begin{theorem}\label{t2}
For any $\epsilon>0$, Algorithm {\sc MultiCKP-PTAS} finds a $(1-3\epsilon, 1)$-approximation to \textsc{MultiCKP$[0,\frac{\pi}{2}]$}. 
The running time of the algorithm is $\left|\bigcup_k D_k\right|^{O(\frac{1}{\epsilon^2})}$.
\end{theorem}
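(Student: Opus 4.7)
Every vertex of $\cP_T(\epsilon)$ lies either on a coordinate axis inside the disk of radius $C$ or on the circle $|z|=C$; by convexity of the disk, $\cP_T(\epsilon)\subseteq\{z\in\CC:|z|\le C\}$, so any allocation with $\sum_k d_k\in\cP_T(\epsilon)$ automatically satisfies the capacity constraint~\raf{C1}. For the running time, the outer loop enumerates subsets $N\subseteq\cN$ of size at most $1/\epsilon$ together with a demand from $D_k$ per user $k\in N$, contributing at most $\bigl|\bigcup_k D_k\bigr|^{O(1/\epsilon)}$ iterations. By Lemma~\ref{l-size}, $m_T(\epsilon)=O(1/\epsilon)$ for every $T$, so by Lemmas~\ref{lem-proj} and~\ref{l2-} each call to the inner {\sc Multi-$m$DKP-PTAS} runs in time $\bigl|\bigcup_k D_k\bigr|^{O(m_T(\epsilon)/\epsilon)}=\bigl|\bigcup_k D_k\bigr|^{O(1/\epsilon^2)}$, which dominates the outer enumeration and yields the stated total running time.

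\textbf{Approximation setup.} Fix an optimal allocation $\bd^*=(d_k^*)_{k\in\cN}$ of value $\OPT$ and write $d^*=\sum_k d_k^*$. The plan is to isolate a single iteration of the outer loop---one in which the guess $N=T^*$ is a carefully chosen subset of the support of $\bd^*$ with $\bar d_k=d_k^*$ for every $k\in T^*$---and argue that \textsc{PGZ}$_{T^*}$ admits a feasible allocation of value at least $(1-2\epsilon)\OPT$. Combined with the $(1-\epsilon)$-guarantee of the inner PTAS (via Lemmas~\ref{lem-proj} and~\ref{l2-}), this gives a returned value of at least $(1-\epsilon)(1-2\epsilon)\OPT\ge(1-3\epsilon)\OPT$. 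My candidate for $T^*$ is the $\lceil 1/\epsilon\rceil$ demands of $\bd^*$ with the largest values $v_k(d_k^*)$; by an averaging argument, every $k\in\bd^*\setminus T^*$ then satisfies $v_k(d_k^*)\le\epsilon\OPT$. In the ``easy'' sub-case where $d^*\in\cP_{T^*}(\epsilon)$, the allocation $\bd^*$ itself is feasible for \textsc{PGZ}$_{T^*}$ and attains value exactly $\OPT$, so the only real content is the opposite sub-case.

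\textbf{The main obstacle} is the sub-case in which $d^*$ lies in the thin crescent $(\cR_{T^*}\cap\{|z|\le C\})\setminus\cP_{T^*}(\epsilon)$ between the polygonal boundary and the circular arc. By construction of the grid at spacings $\epsilon w_{T^*}^R/4$ and $\epsilon w_{T^*}^I/4$, every boundary chord of $\cP_{T^*}(\epsilon)$ in $\cR_{T^*}$ spans at most these amounts in the two coordinate directions, and a standard sagitta estimate then bounds the depth of the crescent by $O(\epsilon^2 C)$. I propose to delete a subset $S\subseteq\bd^*\setminus T^*$ of demands so that $d^*-\sum_{k\in S}d_k^*\in\cP_{T^*}(\epsilon)$, losing at most $|S|\cdot\epsilon\OPT$ in value. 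The delicate point is to bound $|S|$ by a small constant: since non-guessed demands may have arbitrarily small $R$- and $I$-components, many deletions might in principle be required to traverse the $O(\epsilon^2 C)$-thick crescent. I expect the clean resolution is to enlarge $T^*$ by additionally incorporating each demand $d_k^*$ with $\max(d_k^{*R},d_k^{*I})>\Theta(\epsilon C)$---of which there are only $O(1/\epsilon)$ by the capacity bounds $\sum_k d_k^{*R},\sum_k d_k^{*I}\le C$, still keeping $|T^*|=O(1/\epsilon)$ after rescaling $\epsilon$---so that every remaining demand is small in both coordinates. With this control, either a constant-size deletion or a direct scaling/packing argument against the individual grid-aligned edges should suffice to restore a $(1-O(\epsilon))\OPT$ allocation feasible for \textsc{PGZ}$_{T^*}$, completing the approximation bound.
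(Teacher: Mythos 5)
Your feasibility and running-time arguments are correct and match the paper's. The approximation argument, however, has a genuine gap at exactly the point you flag as delicate, and the repair you propose does not work. The relevant scale for a demand being ``small'' is not $\epsilon C$ but $\frac{\epsilon}{4}w_T^{\rm R}$ and $\frac{\epsilon}{4}w_T^{\rm I}$, where $w_T^{\rm R},w_T^{\rm I}$ (defined in \raf{wT}) are the horizontal and vertical gaps between $d_T$ and the circle: the grid is built at spacings $C/2^{\rho_1(T)}\le\frac{\epsilon}{4} w_T^{\rm R}$ and $C/2^{\rho_2(T)}\le\frac{\epsilon}{4} w_T^{\rm I}$, so escaping the crescent requires deleting total real (or imaginary) mass of order $\epsilon w_T^{\rm R}$ (or $\epsilon w_T^{\rm I}$), and the batching argument that makes the deleted batch cheap (Lemma~\ref{lem-pack}) requires every remaining demand to satisfy $d^{\rm R}\le \frac{\epsilon}{4} w_T^{\rm R}$ and $d^{\rm I}\le\frac{\epsilon}{4} w_T^{\rm I}$. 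Your threshold $\Theta(\epsilon C)$ guarantees neither: when the optimum sits near the boundary of the disk, $w_T^{\rm R}$ can be arbitrarily smaller than $C$, so a demand with both components at most $\epsilon C/4$ can still span many grid cells. Nor can you simply raise the threshold to $\frac{\epsilon}{4} w_T^{\rm R}$ in one shot, because (a) the number of demands exceeding it is bounded only by $4C/(\epsilon w_T^{\rm R})$, which is not $O(1/\epsilon)$, and (b) the threshold depends on $T$, which you are in the process of constructing: adding demands to $T$ shrinks $w_T^{\rm R}$ and $w_T^{\rm I}$ and thereby creates new ``large'' demands.

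This circularity is why the paper's proof of Lemma~\ref{main-lem} uses an adaptive, iterative construction rather than a one-shot value- or size-based guess. It builds a nested chain $T_0\subset T_1\subset\cdots$, at each round adding every demand whose real (resp.\ imaginary) part exceeds $\frac{\epsilon}{4}w_{T_\ell}^{\rm R}$ (resp.\ $\frac{\epsilon}{4}w_{T_\ell}^{\rm I}$) for the \emph{current} set, and stops as soon as either $|T_\ell|\ge 1/\epsilon$ or no such demand remains. In the first case a single cheapest element of $T_\ell$ is deleted (it has value at most $\epsilon\cdot v(\bd)$ and real or imaginary part at least one grid spacing, because the gaps only shrink along the chain); in the second case all remaining demands are genuinely small relative to the final gaps, Lemma~\ref{lem-pack} partitions them into $h\ge 1/\epsilon-1$ groups each crossing one grid spacing, and the cheapest group is deleted, losing at most $v(\bd)/h$. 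Your value-based choice of $T^*$ plays no role in the geometry and can be dispensed with; to close the gap you would essentially have to reproduce this iterative construction together with its termination and monotonicity analysis, which is the actual content of Lemma~\ref{main-lem}.
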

\begin{proof}
First, the upper bound on the running time of Algorithm {\sc MultiCKP-PTAS} is due to the fact that each of the $\left|\bigcup_k D_k\right|^{O\left (\frac{1}{\epsilon}\right)}$ iterations in line~\ref{ss1} requires invoking the PTAS of {\sc Multi-$m$DKP}, which in turn takes $\left|\bigcup_{k}D_K\right|^{O(m/\epsilon)}$ time, by Lemma~\ref{l2-}, where $m =O(\frac{1}{\epsilon})$. 

The algorithm outputs a feasible allocation by Lemma \ref{lem-proj} and the construction of $\cP_T(\epsilon)$.
To prove the approximation ratio, we show in Lemma~\ref{main-lem} below that, for any optimal (or feasible) allocation $(d_1^*,\ldots,d_n^*)$, we can construct another feasible allocation $(\widetilde d_1,\ldots,\widetilde d_n)$ such that $\sum_{k}v_k(\widetilde d_k)\ge(1-2\epsilon)\sum_kv_k(d_k^*)$ and $(\widetilde d_1,\ldots,\widetilde d_n)$ is feasible to {\sc PGZ$_T$} for some $T$ of size at most $\frac{1}{\epsilon}$. By Lemma \ref{lem-proj}, invoking the PTAS of {\sc Multi-$m$DKP$\{\sigma^i_T\}$} gives  a $(1-\epsilon)$-approximation $(\widehat d_1,\ldots,\widehat d_k)$ to {\sc PGZ$_T$}. Then
{\small
\begin{eqnarray*}
\sum_kv_k( \widehat d_k )\ge (1-\epsilon) \sum_kv_k(\widetilde d_k) \ge  (1-3\epsilon) \OPT.
\end{eqnarray*}}
\hspace{-0.05in}We give an explicit construction of the allocation $(\widetilde d_1,\ldots,\widetilde d_n)$ in Algorithm~\ref{Construct}, thus completing the proof by Lemma~\ref{main-lem}.
\end{proof}

\begin{lemma}\label{main-lem}
Consider a feasible allocation $\bd=(d_1,\ldots,$ $d_n)$ to \textsc{MultiCKP$[0,\frac{\pi}{2}]$}. Then we can find a set $T\subseteq \{d_1,\ldots,$ $d_n\}$ and construct an allocation $\widetilde \bd=(\widetilde d_1,\ldots,\widetilde d_n)$, such that $|T|\le\frac{1}{\epsilon}$ and
$\widetilde\bd$ is a feasible solution to {\sc PGZ}$_T$ and $v(\widetilde\bd) \ge(1-2\epsilon)v(\bd)$.    
\end{lemma}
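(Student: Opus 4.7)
My plan is to construct $T$ as the set of the top $\lfloor 1/\epsilon\rfloor$ demands of $\bd$ by valuation, with $N\subseteq\cN$ being the corresponding user indices. A pigeonhole-type argument then yields $v_k(d_k)\le\epsilon\,v(\bd)$ for every $k\notin N$, since otherwise $T$ alone would already account for more than $v(\bd)$. I would set $\widetilde d_k=d_k$ for $k\in N$ and restrict $\widetilde d_k\in\{d_k,\bzero\}$ for $k\notin N$, so that the task reduces to choosing a drop set $S=\{k\notin N:\widetilde d_k=\bzero\}$ of small total value for which $d_T+\sum_{k\notin N\setminus S}d_k\in\cP_T(\epsilon)$.

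To choose $S$, I would use an LP-rounding argument. Relax the drop/keep decisions to $y_k\in[0,1]$ and consider
\[
\max\Big\{\sum_{k\notin N}y_k\,v_k(d_k)\;:\;d_T+\sum_{k\notin N}y_k d_k\in\cP_T(\epsilon),\ y_k\in[0,1]\Big\}.
\]
The key structural observation is that $\cP_T(\epsilon)\subset\RR^2$, so at any feasible point at most two of its $m_T(\epsilon)$ facet inequalities are tight simultaneously. Consequently, any basic feasible solution of this LP has at most two fractional coordinates; rounding them down to $0$ gives an integer feasible solution whose objective falls short of the LP optimum by at most $2\max_{k\notin N}v_k(d_k)\le 2\epsilon\,v(\bd)$.

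It remains to show that the LP optimum itself is close to $\sum_{k\notin N}v_k(d_k)$. Here the geometric ingredient is that the grid spacings used to build $\cP_T(\epsilon)$ are bounded by $\epsilon w_T^{\rm R}/4$ and $\epsilon w_T^{\rm I}/4$, giving a chord--arc sagitta of order $O(\epsilon^2 C)$; hence the slightly shrunken allocation $y_k=1-O(\epsilon^2)$ is LP-feasible and witnesses an objective within a $(1-O(\epsilon^2))$-factor of $\sum_{k\notin N}v_k(d_k)$. Combining the two bounds (and rescaling $\epsilon$ by a universal constant if needed) yields $v(\widetilde\bd)\ge (1-2\epsilon)\,v(\bd)$.

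The main obstacle I foresee is the geometric step verifying that $d_T$ itself lies safely inside $\cP_T(\epsilon)$ — so that the LP is non-trivially feasible and the shrinking argument goes through. This is straightforward when $|d_T|$ is bounded away from $C$, because the grid cell anchored at $z_T$ is then entirely contained in $\cP_T(\epsilon)$ by the definitions of $z_T$, $\pi_T^1$ and $\pi_T^2$; the borderline case $|d_T|\approx C$ requires separate handling, for instance by augmenting $T$ with a constant number of additional "boundary" demands, which preserves $|T|=O(1/\epsilon)$ and the $(1-2\epsilon)$-approximation after a harmless rescaling of $\epsilon$.
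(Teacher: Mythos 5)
Your LP--rounding skeleton (choose $T$ by value, relax the drop/keep decisions, observe that the polygon constraints live in a $2$-dimensional row space so a basic optimal solution has at most two fractional coordinates, round them down using the downward-closedness of $\cP_T(\epsilon)$) is a genuinely different and appealing route from the paper's, which instead builds $T$ greedily out of demands that are \emph{geometrically} large relative to the shrinking widths $w_{T_\ell}^{\rm R},w_{T_\ell}^{\rm I}$ and then restores feasibility by partitioning the remaining small demands into $h\ge \frac{1}{\epsilon}-1$ groups, each exceeding one grid step, and deleting the cheapest group (or the cheapest member of $T$). However, your argument has a genuine gap at exactly the point you flag as "the main obstacle," and it is not the only problematic spot. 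The claim that $y_k\equiv 1-O(\epsilon^2)$ is LP-feasible does not follow from a global sagitta bound of $O(\epsilon^2 C)$: shrinking $\kappa\triangleq\sum_{k\notin N}d_k$ by a factor $\gamma$ displaces the total by $\gamma|\kappa|$ in the direction $-\kappa$, and the useful component of that displacement against a violated facet with outward unit normal $\hat n$ is $\gamma\,\kappa\cdot\hat n\approx \gamma\,\kappa\cdot p/C$, which can be far smaller than $\gamma C$ (indeed $|\kappa|$ itself can be arbitrarily small compared to $C$ when $|d_T|$ is close to $C$). To make the argument go through you would need a \emph{local} sagitta bound of order $\epsilon^2\,\kappa\cdot\hat n$ near the point $p=d_T+\kappa$, which in turn requires relating $|\kappa|$ to $w_T^{\rm R},w_T^{\rm I}$ and to the local slope of the arc; none of this is in your write-up, and the naive inference as stated is false.

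More fundamentally, your framework pins $\widetilde d_k=d_k$ for every $k\in N$, so it can never recover in the case where the high-value demands alone push $d_T$ onto (or within less than the local sagitta of) the circle at a non-vertex point: there the LP is infeasible or its optimum collapses, no shrinkage of the low-value demands helps, and feasibility can only be restored by discarding a member of $T$. This is precisely the case the paper's proof is engineered for: because its $T$ consists of demands with $d^{\rm R}>\frac{\epsilon}{4}w^{\rm R}_{T_\ell}$ or $d^{\rm I}>\frac{\epsilon}{4}w^{\rm I}_{T_\ell}$ at the time of insertion, and the widths are monotone decreasing along the construction, removing \emph{any single} element $d_{\widehat k}$ of $T$ yields $\frac{C}{2^{\rho_1(T\setminus\{d_{\widehat k}\})}}<d_{\widehat k}^{\rm R}$ (or the analogous imaginary-part bound), so the remaining sum lands inside $\cP_{T\setminus\{d_{\widehat k}\}}(\epsilon)$; when $|T|\ge\frac{1}{\epsilon}$ the cheapest such element costs at most $\epsilon\, v(\bd)$. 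Your value-based $T$ carries no such guarantee, and the proposed fix of "augmenting $T$ with a constant number of boundary demands" neither addresses the need to \emph{remove} a high-value demand nor respects the hard bound $|T|\le\frac{1}{\epsilon}$ that the enumeration in Algorithm {\sc MultiCKP-PTAS} (and hence the lemma statement) requires. As it stands, the proof is incomplete in both the geometric feasibility step and the borderline case.
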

\begin{lemma}\label{lem-pack}
Consider a set  of demands $S\subseteq\cD$ and $T \subseteq S$, such that
\begin{itemize}
\item $S$ is feasible solution to \textsc{MultiCKP$[0,\frac{\pi}{2}]$}, but $S$ is not a feasible solution to {\sc PGZ}$_T$
\item $d^{\rm R}\le \frac{\epsilon}{4} w^{\rm R}_T$ and  $d^{\rm I}\le \frac{\epsilon}{4} w^{\rm I}_T$, for all $d \in S \backslash T$. 
\end{itemize}
 Then there exists a partition $\{V_1,\ldots, V_h\}$ of $S \backslash T$ such that 
\begin{itemize}
\item 
 either (i) $\sum_{d\in V_j}d^{\rm R}\geq\frac{\epsilon}{4} w^{\rm R}_T$ for all $j={1,\ldots,h}$, 
\item or (ii) $\sum_{d\in V_j}d^{\rm I}\ge\frac{\epsilon}{4}w^{\rm I}_T $ for all $j={1,\ldots,h}$.
\end{itemize} 
where $h\in[\frac{1}{\epsilon}-1,\frac{4}{\epsilon})$.
\end{lemma}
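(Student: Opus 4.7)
The plan reduces the lemma to a key geometric inequality and then to a simple balanced load-balancing argument. Write $\kappa \triangleq \sum_{d \in S \setminus T} d$ and $f \triangleq d_T + \kappa$. Feasibility of $S$ for \textsc{MultiCKP$[0,\frac{\pi}{2}]$} gives $|f|\le C$, and since every demand is in the first quadrant we have $f\succeq d_T\succeq z_T$, hence $f\in\cR_T$.

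The first step is to show that the hypothesis $f \notin \cP_T(\epsilon)$ forces either $\kappa^{\rm R}>\frac{1-\epsilon}{2}w^{\rm R}_T$ or $\kappa^{\rm I}>\frac{1-\epsilon}{2}w^{\rm I}_T$. I plan to prove this by exhibiting an explicit inner triangle $\Delta \subseteq \cP_T(\epsilon)$ with vertices $z_T$, $\pi_T^1$, $\pi_T^2$. The inclusion $\Delta \subseteq \cP_T(\epsilon)$ will follow from convexity once we show $z_T \in \cP_T(\epsilon)$, which in turn holds because $z_T$ sits inside the pentagon on $\{\mathbf{0},{\pi'}_T^1,\pi_T^1,\pi_T^2,{\pi'}_T^2\}$ (a short half-plane check using $|z_T|\le C$ and $z_T^{\rm R},z_T^{\rm I}\ge 0$). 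Barycentric coordinates on $\Delta$ then give that $f \in \Delta$ iff $(f^{\rm R} - z_T^{\rm R})/w^{\rm R}_{z_T} + (f^{\rm I} - z_T^{\rm I})/w^{\rm I}_{z_T}\le 1$, where $w^{\rm R}_{z_T}$ and $w^{\rm I}_{z_T}$ are the analogues of $w^{\rm R}_T, w^{\rm I}_T$ centred at $z_T$. Using the grid-spacing bound $d_T^{\rm R} - z_T^{\rm R} \le C/2^{\rho_1(T)} \le \epsilon w^{\rm R}_T/4$ (and its imaginary twin) together with the monotonicity $w^{\rm R}_{z_T}\ge w^{\rm R}_T$ and $w^{\rm I}_{z_T}\ge w^{\rm I}_T$, this condition is implied by $\kappa^{\rm R}/w^{\rm R}_T + \kappa^{\rm I}/w^{\rm I}_T \le 1 - \epsilon/2$. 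Contraposition and pigeonhole then deliver the claimed lower bound on $\kappa$.

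The second step is the partition itself. Assume WLOG $\kappa^{\rm R}>(1-\epsilon)w^{\rm R}_T/2$ (the other case gives (ii) by a mirrored argument). Set $h = \lfloor 2\kappa^{\rm R}/(\epsilon w^{\rm R}_T)\rfloor$ and distribute the elements of $S\setminus T$ into $h$ bins $V_1,\ldots,V_h$ by greedily adding each element to the currently least-loaded bin, measured by real part. A standard induction shows that the spread between the maximum and minimum bin-loads never exceeds the largest item size and is therefore at most $\epsilon w^{\rm R}_T/4$. An averaging argument then yields $\min_j\sum_{d\in V_j}d^{\rm R} \ge \kappa^{\rm R}/h - \epsilon w^{\rm R}_T/4 \ge \epsilon w^{\rm R}_T/4$ by the choice of $h$, establishing property (i). The range $h\in[\tfrac{1}{\epsilon}-1,\tfrac{4}{\epsilon})$ follows from $\kappa^{\rm R}>(1-\epsilon)w^{\rm R}_T/2$ and $\kappa^{\rm R}\le w^{\rm R}_T$, using that $1/\epsilon$ is assumed integer.

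The main technical obstacle will be the geometric step: selecting $\Delta$ rather than, say, a coordinate-aligned rectangle as the inner approximation is precisely what produces the clean factor $1/2$ needed to guarantee at least $1/\epsilon-1$ groups, and one has to carefully track the grid-spacing slack $\epsilon w^{\rm R}_T/4$, $\epsilon w^{\rm I}_T/4$ through the barycentric inequality while exploiting the monotonicities $w^{\rm R}_{z_T}\ge w^{\rm R}_T$, $w^{\rm I}_{z_T}\ge w^{\rm I}_T$. The load-balancing step itself is routine.
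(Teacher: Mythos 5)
Your proposal is correct, and its skeleton matches the paper's: reduce to showing that infeasibility of $f=d_T+\kappa$ for {\sc PGZ}$_T$ forces one coordinate of $\kappa$ to be a constant fraction of the corresponding width $w^{\rm R}_T$ or $w^{\rm I}_T$, then split $S\setminus T$ into groups each carrying at least $\frac{\epsilon}{4}$ of that width. The two sub-steps are executed differently, though, and in a way worth noting. For the geometric step the paper simply asserts (with a picture) that $\kappa^{\rm R}\ge\frac{1}{2}w^{\rm R}_T$ or $\kappa^{\rm I}\ge\frac{1}{2}w^{\rm I}_T$; you instead inscribe the triangle $\conv\{z_T,\pi_T^1,\pi_T^2\}\subseteq\cP_T(\epsilon)$ and push $f\notin\cP_T(\epsilon)$ through the barycentric inequality, absorbing the grid offsets $d_T-z_T\preceq(\frac{\epsilon w^{\rm R}_T}{4},\frac{\epsilon w^{\rm I}_T}{4})$ and the monotonicity $w^{\rm R}_{z_T}\ge w^{\rm R}_T$, $w^{\rm I}_{z_T}\ge w^{\rm I}_T$ to get the slightly weaker but fully justified bound $\kappa^{\rm R}>\frac{1-\epsilon}{2}w^{\rm R}_T$ or $\kappa^{\rm I}>\frac{1-\epsilon}{2}w^{\rm I}_T$; the integrality of $\frac{1}{\epsilon}$ then still yields $h\ge\frac{1}{\epsilon}-1$, so nothing is lost. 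For the partition the paper fixes an order and cuts consecutive batches whose real parts land in $(\frac{\epsilon}{4}w^{\rm R}_T,\frac{\epsilon}{2}w^{\rm R}_T]$, whereas you fix $h=\lfloor 2\kappa^{\rm R}/(\epsilon w^{\rm R}_T)\rfloor$ bins and use least-loaded greedy balancing; the invariant that the load spread never exceeds the maximum item size ($\le\frac{\epsilon}{4}w^{\rm R}_T$) plus averaging gives the per-bin lower bound, and $\kappa^{\rm R}\le w^{\rm R}_T$ gives $h<\frac{4}{\epsilon}$. Both routes are valid; yours buys a fully explicit proof of the geometric dichotomy (which the paper leaves to a figure) at the cost of tracking the extra $\epsilon$-slack, and the load-balancing variant is interchangeable with the paper's batching.
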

\subsection{Making the PTAS Truthful}\label{sec:truthful-ptas}
We now state our main result for this section.
\begin{theorem}\label{t-TIE-CKP}
For any $\epsilon,\delta>0$ there is a $(1-3\epsilon)$-socially efficient truthful mechanism for {\sc multiCKP}$[0,\frac{\pi}{2}-\delta]$. The running time is $\left|\bigcup_k D_k\right|^{O\left(\frac{\cot^2\frac{\delta}{2}}{\epsilon^2}\right)}$.
\end{theorem}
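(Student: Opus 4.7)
The plan is to turn Algorithm~\ref{CKP-PTAS} into a maximal-in-range (MIR) mechanism and then invoke VCG payments on the resulting range. The sole obstruction to truthfulness is that the outer loop of the PTAS enumerates partial selections $T=(\bar d_k\in D_k:k\in N)$ drawn from the bidders' declarations, which makes its effective range of allocations declaration-dependent. I will eliminate this dependency by instead enumerating partial reference selections whose values lie on a fixed grid $G$ depending only on $\epsilon,\delta,C$.

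Since every admissible demand has $\arg(d)\in[0,\pi/2-\delta]$ and $|d|\le C$, the relevant portion of the first quadrant is a bounded sector. I cover it with a rectangular grid $G$ of step $\Theta(\epsilon C\tan\tfrac{\delta}{2})$, so that $|G|=O(\cot^2\tfrac{\delta}{2}/\epsilon^2)$ and every admissible demand $d$ has a point $\tilde d\in G$ with $\tilde d\preceq d$ and $|d-\tilde d|=O(\epsilon C\tan\tfrac{\delta}{2})$. I then define the range
\[
\cR\triangleq\bigcup_{\substack{N\subseteq\cN,\,|N|\le 1/\epsilon \\ \bar\bd=(\bar d_k\in G:k\in N)}} \Big\{(d_1,\ldots,d_n): d_k=\bar d_k\ \forall k\in N,\ (d_k)_{k\not\in N}\in\cS_{\bar\bd}\Big\},
\]
where $\cS_{\bar\bd}$ is the Multi-$m$DKP range of Section~\ref{sec:Tm-mC-KS} applied to the polygon $\cP_{\bar\bd}(\epsilon)$ with residual capacity vector $|\sigma^i_{\bar\bd}|-|\Pj(\sum_{k\in N}\bar d_k)|$ along each direction $\sigma^i_{\bar\bd}$. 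Because both $G$ and each $\cS_{\bar\bd}$ are defined without reference to the bids, $\cR$ is declaration-independent. Maximizing $v(\bd)$ over $\cR$ reduces to enumerating $\binom{n}{\le 1/\epsilon}\cdot|G|^{1/\epsilon}$ scenarios and, for each, invoking the Multi-$m$DKP MIR of Lemma~\ref{l2-}, yielding the claimed $\left|\bigcup_k D_k\right|^{O(\cot^2(\delta/2)/\epsilon^2)}$ running time.

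For the approximation guarantee, I start from an optimal allocation $\bd^*$, apply Lemma~\ref{main-lem} to obtain $T^*\subseteq\bd^*$ with $|T^*|\le 1/\epsilon$ and a solution $\widetilde\bd$ feasible to {\sc PGZ}$_{T^*}$ with $v(\widetilde\bd)\ge(1-2\epsilon)\OPT$, and then round each $\bar d\in T^*$ down coordinate-wise to its nearest dominated point $\tilde d\in G$. Monotonicity of every $v_k$ under $\preceq$ yields $v_k(\tilde d_k)\ge(1-O(\epsilon))v_k(\bar d_k)$ as long as the grid step is small compared with the scale of $\bar d_k$. Because the aggregate shift of the reference point is at most $O(\epsilon C)$ in each coordinate, the polygon $\cP_{\sum_k\tilde d_k}(\epsilon')$ with $\epsilon'=O(\epsilon)$ contains $\cP_{T^*}(\epsilon)$ after a further $(1-O(\epsilon))$-scaling of the residual allocation. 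Summing losses gives $\max_{\bd\in\cR}v(\bd)\ge(1-3\epsilon)\OPT$; pairing this MIR with VCG payments computed over $\cR$ delivers the claimed $(1-3\epsilon)$-socially efficient truthful mechanism.

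The main obstacle is controlling this rounding step. Since each $v_k$ is monotone only under $\preceq$, the reference points must be rounded strictly downward, which shrinks the residual packing region $\cR_{\sum_k\tilde d_k}$ into which the remaining users must still be packed. The hypothesis $\phi\le\pi/2-\delta$ is exactly what keeps the slack widths $w^{\rm R}_T$ and $w^{\rm I}_T$ bounded below uniformly over all relevant $T$, so that a grid of step $\Theta(\epsilon C\tan\tfrac{\delta}{2})$ is simultaneously fine enough for $(1-O(\epsilon))$-approximation and sparse enough to keep $|G|^{1/\epsilon}$ polynomial; as $\delta\to 0$ this resolution blows up, which is precisely the source of the $\cot^2(\delta/2)$ factor in the exponent and the reason the argument does not extend to the boundary case $\phi<\pi/2$ without additional ideas.
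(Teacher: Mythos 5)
Your overall architecture (declaration-independent range plus VCG) matches the paper's, but the specific device you use to make the range declaration-independent --- snapping the guessed large demands to a fixed grid $G$ and allocating the grid points themselves to the users in $N$ --- has a fatal gap at the step ``$v_k(\tilde d_k)\ge(1-O(\epsilon))v_k(\bar d_k)$ as long as the grid step is small compared with the scale of $\bar d_k$.'' Monotonicity of $v_k$ under $\preceq$ gives only the \emph{upper} bound $v_k(\tilde d_k)\le v_k(\bar d_k)$ when $\tilde d_k\preceq\bar d_k$; it gives no lower bound. Under the multi-minded definition \raf{mm-val}, $v_k(\tilde d_k)=\max\{v_k(d'):d'\in D_k,\ d'\preceq\tilde d_k\}$, which is a step function of $\tilde d_k$ in the partial order: if $\bar d_k\in D_k$ is the demand realizing the optimal value and no other element of $D_k$ is dominated by $\tilde d_k$, then $v_k(\tilde d_k)=0$ no matter how fine the grid is. So rounding the large demands \emph{down} can destroy an arbitrarily large fraction of $\OPT$ (and rounding \emph{up} instead would either overshoot the hard capacity $C$ by $\Theta(C\tan\frac{\delta}{2})$ in aggregate, or force you to shrink the residual region, which reintroduces the same problem for the small demands). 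A secondary weak point is the claim that $\cP_{\sum_k\tilde d_k}(\epsilon')$ contains $\cP_{T^*}(\epsilon)$ ``after a $(1-O(\epsilon))$-scaling of the residual allocation'': moving the anchor down coarsens the grid levels $\rho_1,\rho_2$, so the new polygon can exclude near-boundary points of the old one, and ``scaling the residual allocation'' is not an operation available in this discrete setting.

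The paper avoids both issues by \emph{not} rounding the large demands at all: the range keeps $T$ as an exact tuple of (in principle, arbitrary) demands and uses a zero-valuation dummy user to absorb the difference $z-d_T$ between the true anchor $d_T$ and a perturbed anchor $z$, so the real users in $N$ retain their full valuations. Declaration-independence and tractability are then recovered via Lemma~\ref{cl1-}, which shows (using the angle bound $\delta$ to control $w_T^{\rm I}/w_T^{\rm R}$) that a misreport can change each grid level $\rho_i(T)$ by at most one, hence only the $O(1/\epsilon)$ candidate anchors in $G(T)$ for $T\subseteq\bigcup_k D_k$ need to be enumerated. If you want to salvage your grid-based route, you must find a way to fix the polygon's anchor without altering the demands actually allocated to the users in $N$ --- which is essentially the paper's dummy-user trick.
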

\begin{proof}
It suffices to define a declaration-independent range $\cS$ of feasible allocations, such that $\max_{\bd\in\cS}v(\bd)\ge(1-3\epsilon)\cdot\OPT$, and we can optimize over $\cS$ in the stated time. 

\medskip

One technical difficulty that arises in this case is that the polygons $\cP_T(\epsilon)$ defined by a guessed initial sets $T$ are not {\it monotone} w.r.t. the set of demands in $T$, that is, if we obtain $T'$ from $T$ by increasing one of th demands from $d_k$ to $d_k'\succ d_k$, then it could be the case that $\cP_T(\epsilon)\not\supseteq \cP_{T'}(\epsilon)$. This implies that the algorithm can be manipulated by a selfish user in $T$ who untruthfully increases his demand to change his allocation by the algorithm and  become a winner. To handle this issue, we will show that the number of possible polygons that arise from such a selfish user, misreporting his true demand set, and can possibly change the outcome, is only a constant in $\epsilon$ and $\delta$. Thus, it would be enough to consider only all such polygons arising from the reported demand set.   
  
Since we assume that $\arg(d)\in[0,\frac{\pi}{2}-\delta]$, for all $d\in \bigcup_kD_k$, we may assume further by performing a rotation that any such vector $d$ satisfies $\arg(d)\in[\frac{\delta}{2},\frac{\pi}{2}-\frac{\delta}{2}]$. 
For convenience, we continue to denote the new demand sets by $D_k$, and redefine the valuation functions in terms of these rotated sets. By this assumption, 
\begin{equation}\label{e-a}
\tan\frac{\delta}{2}\le\frac{d_T^{\rm R}}{d_T^{\rm I}}\le\left(\tan\frac{\delta}{2}\right)^{-1},~~\text{for any $T\subseteq\cD$}.
\end{equation}
We may also assume, by scaling $\epsilon$ by $2/(1+2\cot^2\frac{\delta}{2})$ if necessary, that
\begin{equation}\label{assum}
\epsilon\le\frac{2}{1+2\cot^2\frac{\delta}{2}}.
\end{equation}
For $T\subseteq\cD$, let $G(T)$ be the set of vectors in $\CC$ defined by the union of $\{d_T\}$ and 
\begin{itemize}
\item[(a)] the (component-wise) minimal grid points $z\in\cR_T$, such that $z=\ell_1\cap\ell_2$ for some $\ell_1\in\cL_1(\rho_1(T)+1)$ and $\ell_2\in\cL_2(\rho_2(T)+1)$, and either $\rho_1(\{z\})=\rho_1(T)+1$ {\it or} $\rho_2(\{z\})=\rho_2(T)+1$, but not both; and
\item[(b)] the (component-wise) minimal grid points $z\in\cR_T$, such that $z=\ell_1\cap\ell_2$ for some $\ell_1\in\cL_1(\rho_1(T)+1)$ and $\ell_2\in\cL_2(\rho_2(T)+1)$, and $\rho_1(\{z\})=\rho_1(T)+1$ {\it and} $\rho_2(\{z\})=\rho_2(T)+1$. 
\end{itemize}
Note that $|G(T)|=O(\frac{1}{\epsilon})$.
For convenience of notation, let us fix two subsets $\cD_1,\cD_2\subseteq\cD$.
For $z\in G(T)$, let us denote by $\cS_z(\cD_2)$ the range of feasible allocations defined as in \raf{range} with respect to the \textsc{Multi-$m$DKP} problem with constraints \raf{mCm-1}-\raf{mCm-2}, when 
\begin{itemize}
\item[(I)] $T$ is replaced by $T\cup\{z-d_T\}$ (and hence, $z$ is used to define the polygon $\cP_T(\epsilon)$);
\item[(II)] we add an additional ``dummy'' user $n+1$ to $\cN$ with valuation $v_{n+1}(d)=0$ for all $d\in\cD$, such that the vector $z-d_T$ as allocated to this user; and 
\item[(III)] the set of vectors in $\cN\backslash N$ is chosen from $\cD_2$. 
\end{itemize}
Then we define the range $\cS(\cD_1,\cD_2)$ as the union: $$\cS(\cD_1,\cD_2)\triangleq\bigcup_{T\subseteq\cD_1:~|T|\le\frac{1}{\epsilon}}\left(\bigcup_{z\in G(T)}\cS_{z}(\cD_2)\right).$$ 
By Lemmas \ref{l1-} and \ref{main-lem}, we have $\max_{\bd\in\cS(\cD,\cD)}v(\bd)\ge(1-3\epsilon)\OPT$ (since $d_T\in G(T)$). It remains to argue that we can efficiently  optimize over $\cS(\cD,\cD)$. 
Using Lemma~\ref{cl1-} proved below, we argue that we can solve the optimization problem over $\cS(\cD,\cD)$ assuming that $\cD=\bigcup_{k}D_k$, that is, $\max_{\bd\in \cS(\cD,\cD)}v(\bd)=\max_{\bd\in \cS(\bigcup_{k}D_k,\bigcup_{k}D_k)}v(\bd).$ 
One direction ``$\geq$'' is obvious;
so let us show that $\max_{\bd\in \cS(\cD,\cD)}v(\bd)\leq\max_{\bd\in \cS(\bigcup_{k}D_k,\bigcup_{k}D_k)}v(\bd).$ 

Suppose that $\bd^*=(d_1^*,\ldots,d_n^*)$ is an optimal allocation over $\cS(\cD,\cD)$, but such that $\bd^*\in\cS_{z'}$ for some $z'\in G(T'')$,  $T''\subseteq\cD$, and $T''\not\subseteq \bigcup_kD_k$. Then let us show that there is a set $T\subseteq \bigcup_kD_k$, $z\in G(T)$, and $\widetilde\bd\in\cS_z(\cD)$, such that  $v(\widetilde\bd)=v(\bd^*)$.

Define an allocation $\widetilde\bd$ as follows: Let $N=\{k:~d_k^*\in T''\}$; for each $k\in N$, we choose $\widetilde d_k\in D_k$ such that $\widetilde d_k\preceq d_k^*$ and $v_k(\widetilde d_k)=v_k(d_k^*)$, and we keep $\widetilde d_k=d_k^*$ if $k\not\in N$.
Let us apply the statement of the lemma with $T=\{\widetilde d_k:~k\in N\}$, $T'=T''\cup\{z'-d_{T''}\}$, and $\kappa=\sum_{k:k\not\in N\cup\{n+1\}}d_k^*$. If (i) holds then
$d_T+\kappa\in \cP_T(\epsilon)$ and therefore we have  
\begin{equation}\label{oneDir}
\max_{\bd\in\cS(\cD,\cD)}v(\bd)=\max_{\bd\in \cS(\bigcup_kD_k,\cD)}v(\bd).
\end{equation}
On the other hand,
if (ii) holds, then $\rho_1(T')\in\{\rho_1(T),\rho_1(T)+1\}$ and $\rho_2(T')\in\{\rho_2(T),\rho_2(T)+1\}$. In this case, if $\rho_1(T')=\rho_1(T)$ and $\rho_2(T')=\rho_2(T)$ then $\cP_{T'}(\epsilon)\subseteq     
\cP_{T}(\epsilon)$ (since $d_T\preceq d_{T'}$), in contradiction that (i) does not hold; otherwise, there is a point $z\in G(T)$ such that $z\preceq z'$, $\rho_1(T\cup\{z-d_T\})=\rho_1(T')$ and $\rho_2(T\cup\{z-d_T\})=\rho_2(T')$. Then $z+\kappa\preceq z'+\kappa\in\cP_{T'}(\epsilon)\subseteq\cP_{T\cup\{z-d_T\}}(\epsilon)$, and we get again \raf{oneDir}. 

Finally, we note that
\begin{equation*}
\max_{\bd\in \cS(\bigcup_kD_k,\cD)}v(\bd)=\max_{\bd\in\cS(\bigcup_{k}D_K,\bigcup_kD_k)}v(\bd),
 \end{equation*}as follows from (the proof of) Lemma~\ref{l2-}.

\end{proof}

\begin{lemma}\label{cl1-}
Let $T,T'\subseteq\cD$ be such that $d_T\preceq d_{T'}$. Consider a vector $\kappa\in\CC$ such that $d_{T'}+\kappa\in \cP_{T'}(\epsilon)$. Then either (i) $d_{T}+\kappa\in\cP_{T}(\epsilon)$, or (ii) $\rho_1(T')\le \rho_1(T)+1$ and $\rho_2(T')\le \rho_2(T)+1$.
\end{lemma}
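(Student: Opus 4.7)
The plan is to prove the contrapositive: assuming~(ii) fails, I would show~(i). WLOG $\rho_1(T')\ge \rho_1(T)+2$ (the other case is symmetric by swapping the real and imaginary directions). In the intended application (inside Theorem~\ref{t-TIE-CKP}) the shift $\kappa$ is a sum of positive-quadrant demands, so $\kappa^{\rm R},\kappa^{\rm I}\ge 0$ and $d_T+\kappa\succeq z_T$, i.e., $d_T+\kappa\in \cR_T$; hence the question reduces to whether $d_T+\kappa$ lies inside the polygonal arc defining $\cP_T(\epsilon)$.

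I would first convert the grid-level gap into a width bound. Minimality of $\rho_1(T)$ gives $w_T^{\rm R}\ge \tfrac{4C}{\epsilon\,2^{\rho_1(T)}}$, while minimality of $\rho_1(T')$ gives $w_{T'}^{\rm R}< \tfrac{8C}{\epsilon\,2^{\rho_1(T')}}$; combined with $2^{\rho_1(T')}\ge 4\cdot 2^{\rho_1(T)}$, these yield $w_{T'}^{\rm R}\le \tfrac{1}{2} w_T^{\rm R}$, hence
\[
w_T^{\rm R}-w_{T'}^{\rm R}\;=\;\bigl[\sqrt{C^2-(d_T^{\rm I})^2}-\sqrt{C^2-(d_{T'}^{\rm I})^2}\bigr]+(d_{T'}^{\rm R}-d_T^{\rm R})\;\ge\;\tfrac{1}{2} w_T^{\rm R}.
\]

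Next I would lower-bound the horizontal slack of $d_T+\kappa$. Using $(d_{T'}+\kappa)^{\rm R}\le \sqrt{C^2-((d_{T'}+\kappa)^{\rm I})^2}$ (from $d_{T'}+\kappa\in\{|z|\le C\}$) and $(d_T+\kappa)=(d_{T'}+\kappa)-(d_{T'}-d_T)$, I obtain
\[
\sqrt{C^2-((d_T+\kappa)^{\rm I})^2}-(d_T+\kappa)^{\rm R}\;\ge\;(d_{T'}^{\rm R}-d_T^{\rm R})+\bigl[\sqrt{C^2-((d_T+\kappa)^{\rm I})^2}-\sqrt{C^2-((d_{T'}+\kappa)^{\rm I})^2}\bigr].
\]
The bracketed increment dominates its un-shifted counterpart $\sqrt{C^2-(d_T^{\rm I})^2}-\sqrt{C^2-(d_{T'}^{\rm I})^2}$ because $g(y):=-\sqrt{C^2-y^2}$ is convex on $[0,C]$, which gives $g(b+u)-g(a+u)\ge g(b)-g(a)$ for $0\le a\le b$, $u\ge 0$, $b+u\le C$ (apply with $a=d_T^{\rm I}$, $b=d_{T'}^{\rm I}$, $u=\kappa^{\rm I}$). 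Combining with the previous display, the horizontal slack is at least $w_T^{\rm R}-w_{T'}^{\rm R}\ge \tfrac{1}{2} w_T^{\rm R}\ge \tfrac{C}{2^{\rho_1(T)}}$, i.e., at least one R-grid cell.

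To conclude, I would verify membership chord-by-chord. Let $b:=(d_T+\kappa)^{\rm I}$ and $a:=(d_T+\kappa)^{\rm R}$. For the chord of the polygonal arc with endpoints $(x_1,y_1),(x_2,y_2)$ satisfying $y_2\le b\le y_1$ and $x_1\le x_2$, the required half-plane inequality reduces to $a\le x_1+\tfrac{x_2-x_1}{y_1-y_2}(y_1-b)$; this right-hand side differs from $\sqrt{C^2-b^2}$ by at most $x_2-x_1\le \tfrac{C}{2^{\rho_1(T)}}$, because consecutive grid points on the arc are separated by at most one vertical grid step, and the horizontal slack bound from the preceding paragraph covers exactly this gap. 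For chords strictly above or below the height $b$, concavity of $y\mapsto\sqrt{C^2-y^2}$ shows that the linearly extrapolated chord lies above the true arc at height $b$, so the half-plane inequality follows automatically from $a\le\sqrt{C^2-b^2}$. Thus $d_T+\kappa\in\cP_T(\epsilon)$. The main technical subtlety I anticipate is carefully justifying the chord-width bound $x_2-x_1\le \tfrac{C}{2^{\rho_1(T)}}$ uniformly, including for the extremal chords adjacent to $\pi_T^1$ and $\pi_T^2$, and verifying that the convexity transfer in Step~2 remains valid on the relevant subinterval of $[0,C]$.
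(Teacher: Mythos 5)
Your proof is correct but takes a genuinely different route. You prove the implication $\neg$(ii)$\Rightarrow$(i), whereas the paper proves the equivalent $\neg$(i)$\Rightarrow$(ii): it assumes $d_T+\kappa\not\in\cP_T(\epsilon)$, argues that then $d_T+\kappa$ and $d_{T'}+\kappa$ lie in the same grid cell so that $d_{T'}^{\rm R}-d_T^{\rm R}\le\frac{C}{2^{\rho_1(T)}}\le\frac{\epsilon w_T^{\rm R}}{4}$ (and likewise for the imaginary parts), and then pushes this through an algebraic identity for $w_T^{\rm R}-w_{T'}^{\rm R}$ to get $w_{T'}^{\rm R}\ge\frac12 w_T^{\rm R}$, which forces $\rho_1(T')\le\rho_1(T)+1$. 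Crucially, that last algebraic step bounds $w_T^{\rm I}/w_T^{\rm R}$ via the angle condition \raf{e-a} and needs the rescaling assumption \raf{assum} on $\epsilon$; your argument instead extracts $w_{T'}^{\rm R}\le\frac12 w_T^{\rm R}$ directly from the two-sided defining inequalities for $\rho_1(T)$ and $\rho_1(T')$ and converts it into a horizontal slack of at least one grid cell for $d_T+\kappa$ via the increasing-differences (convexity) inequality for $-\sqrt{C^2-y^2}$. This buys you independence from $\delta$: your proof of the lemma itself does not use \raf{e-a} or \raf{assum} at all, which is cleaner (though the theorem still needs $\delta>0$ elsewhere, to bound $|G(T)|$ and the number of candidate polygons). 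The price is a more delicate geometric endgame, but your chord-by-chord verification is sound: every edge of $\cP_T(\epsilon)$ on the arc side joins consecutive points of $P_T(\epsilon)\cup\{\pi_T^1,\pi_T^2\}$, and since $\pi_T^1$ lies on a vertical grid line and every consecutive pair is separated by at most one vertical grid step, the horizontal chord width is uniformly at most $\frac{C}{2^{\rho_1(T)}}$, which your slack bound covers; the chords not straddling the height of $d_T+\kappa$, and the top/right/axis edges, follow from $|d_T+\kappa|\le C$ and $d_T+\kappa\succeq z_T$. Two caveats you should make explicit if you write this up: (a) you restrict to $\kappa\succeq\bzero$, which is not in the lemma's statement but is exactly how the lemma is invoked in Theorem~\ref{t-TIE-CKP} (and is also needed to justify $|d_T+\kappa|\le|d_{T'}+\kappa|\le C$ and $d_T+\kappa\in\cR_T$); and (b) the symmetric case $\rho_2(T')\ge\rho_2(T)+2$ should be run with the roles of the real and imaginary axes (and of $\cL_1,\cL_2$) interchanged, yielding a vertical slack against a vertical chord-height bound.
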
 
\begin{proof}
Suppose that $d_{T}+\kappa\not\in\cP_{T}(\epsilon)$. Then it also holds that  $d_{T'}+\kappa\not\in\cP_{T}(\epsilon)$ (since $d_{T'}\succeq d_T)$. This implies that both $d_{T}+\kappa$ and $d_{T'}+\kappa$ lie within the same grid cell at vertical and horizontal levels $\rho_1(T)$ and  $\rho_2(T)$, respectively, and hence $d_{T'}^{\rm R}-d_{T}^{\rm R}\le\frac{C}{2^{\rho_1(T)}}\le\frac{\epsilon w_T^{\rm R}}{4}$ and $d_{T'}^{\rm I}-d_{T}^{\rm I}\le\frac{C}{2^{\rho_2(T)}}\le\frac{\epsilon w_T^{\rm I}}{4}$. 

Form the definition \raf{wT} of $w_{T}^{\rm R}$, we have 
\begin{eqnarray}\label{e1-}
w_{T}^{\rm R}&=&w_{T'}^{\rm R}+d_{T'}^{\rm R}-d_{T}^{\rm R}+\left(d_{T'}^{\rm I}-d_{T}^{\rm I}\right)\left(\frac{d_{T'}^{\rm I}+d_{T}^{\rm I}}{d_{T'}^{\rm R}+w_{T'}^{\rm R}+d_{T}^{\rm R}+w_{T}^{\rm R}}\right)\nonumber\\
&\le&w_{T'}^{\rm R}+\frac{\epsilon w_T^{\rm R}}{4}+\frac{\epsilon w_T^{\rm I}}{4}\left(\frac{d_{T'}^{\rm I}+d_{T}^{\rm I}}{d_{T'}^{\rm R}+d_{T}^{\rm R}}\right)\nonumber\\
&\le &w_{T'}^{\rm R}+\frac{\epsilon w_{T}^{\rm R}}{4}\left(1+\frac{w_T^{\rm I}}{w_T^{\rm R}}\cdot\frac{1}{\tan \frac{\delta}{2}}\right),
\end{eqnarray}
where we use \raf{e-a} in the last inequality. We can upper-bound $w_T^{\rm I}/w_T^{\rm R}$ by $2/\tan\frac{\delta}{2}$ also using \raf{e-a} as follows:
$$
\frac{w_T^{\rm I}}{w_T^{\rm R}}=\frac{\sqrt{1 -\left(\frac{d_T^{\rm I}}{C}\right)^2} + \frac{d_T^{\rm R}}{C}}{ \sqrt{1 -\left(\frac{d_T^{\rm R}}{C}\right)^2} + \frac{d_T^{\rm I}}{C}} \le\frac{1+\frac{d_T^{\rm R}}{C}}{\sqrt{1 -\left(\frac{d_T^{\rm R}}{C}\right)^2} + \frac{d_T^{\rm R}}{C}\tan\frac{\delta}{2}}.
$$
The latter quantity is bounded by $f(1)=2\tan\frac{\delta}{2}$, since the function $f(a)\triangleq\frac{1+a}{\sqrt{1-a^2}+a\tan\frac{\delta}{2}}$ is monotone increasing in $a\in[0,1]$. Using this bound in \raf{e1-} and rearranging terms, we get
\begin{equation}\label{e3-}
w_{T'}^{\rm R}\ge w_{T}^{\rm R}\left(1-\frac{\epsilon}{4}(1+2\cot^{2}\frac{\delta}{2})\right)\ge \frac{1}{2}w_{T}^{\rm R},
\end{equation}
by our assumption~\raf{assum} on $\epsilon$.
From \raf{e3-} and $\frac{\epsilon w_{T'}^{\rm R} }{8}<\frac{C}{2^{\rho_1(T')}}$, and $\frac{C}{2^{\rho_1(T)}}\le\frac{\epsilon w_{T}^{ \rm R}}{4}$, follows that $\rho_1(T')\le \rho_1(T)+1$. Similarly, we have $\rho_2(T')\le \rho_2(T)+1$. 
\end{proof} 

\section{A Truthful FPTAS for {\sc MultiCKP}$[0,\pi \mbox{-} \varepsilon]$}\label{sec:tbp}
As in \cite{KTV13}, the basic idea is to round off the set of possible demands to obtain a range, by which we can optimize over in polynomial time using dynamic programming (to obtain an MIR). 

Let $\theta = \max\{\phi - \frac{\pi}{2},0\}$, where $\phi\triangleq\max_{d\in\cD}{\rm arg}(d)$. We assume that $\tan \theta$ is bounded by an {\it a-priori} known polynomial $P(n)\ge 1$ in $n$, that is {\it independent} of the customers valuations.
We can upper bound the total projections for any feasible allocation $\bd=(d_1,\ldots,d_n)$ of demands as follows:
{\small
\begin{align}
 \sum_{k \in \cN} d_k^{\rm I} \le C, \quad
 \sum_{k \in \cN_- } - d_k^{\rm R}   \le  C \tan \theta, \quad 
 \sum_{k \in \cN_+}  d_k^{\rm R}   &\le C(1+ \tan \theta), \label{eq:ubounds}\nonumber
\end{align}
}
\hspace{-0.05in}where $\cN_+ \triangleq \{ k\in \cN \mid d_k^{\rm R} \ge 0\}$ and $\cN_- \triangleq \{ k \in \cN\mid   d_k^{\rm R} < 0 \}$.
Define $L \triangleq \frac{\epsilon C}{n (P(n)+1)}$, and for $d\in\cD$, define the new rounded demand $\widehat{d}$ as follows:
{\small
\begin{equation}
\widehat d =
\widehat d^{\rm R} + {\bf i} \widehat d^{\rm I} \triangleq 
\left\{\begin{array}{ll}
\left\lceil \frac{d^{\rm R}}{L} \right\rceil \cdot L + {\bf i} \left\lceil \frac{d^{\rm I}}{L} \right\rceil \cdot L, &\text{ if }d^{\rm R}\ge 0,\\[3mm]
\left\lfloor \frac{d^{\rm R}}{L} \right\rfloor \cdot L + {\bf i} \left\lceil \frac{d^{\rm I}}{L} \right\rceil \cdot L, & \text{ otherwise. } 
\end{array}\right.
\label{eq:truc}
\end{equation}}
\hspace{-0.05in}Consider an optimal allocation $\bd^\ast=(d_1^\ast,\ldots,d_n^\ast)$ to \textsc{MultiCKP} $[0,\pi\mbox{-}\varepsilon]$.
Let $\xi_+$ (and $\xi_-$), $\zeta_+$ (and $\zeta_-$) be the respective guessed real and imaginary absolute total projections of the rounded demands in $S^\ast_+\triangleq\{k:d_k^{\rm R}\ge 0\}$ (and $S_-^\ast\triangleq\{k:d_k^{\rm R}< 0\}$).
Then the possible values of $\xi_+, \xi_-, \zeta_+, \zeta_-$ are integral mutiples of $L$ in the following ranges:
{\small
\begin{align*}
 \xi_+ \in {\cal A}_+ & \triangleq \left\{0, L, 2L,\ldots,\left\lceil \frac{C (1 + P(n) )}{L} \right\rceil \cdot L\right\},\\
\xi_- \in {\cal A}_-& \triangleq \left\{0,L, 2L,\ldots, \left\lceil \frac{C \cdot P(n) }{L} \right\rceil\cdot L \right\},\\
\zeta_+,\zeta_-  \in {\cal B}&  \triangleq \left\{0, L, 2L,\ldots,\left\lceil \frac{C}{L} \right\rceil\cdot L\right\}.
\label{eq:grid}
\end{align*}
}
\hspace{-0.05in}Let further $\widehat{\cD}\triangleq \{ \frac{d}{L} \in \cD :~ d^{\rm R}\in\cA_+ \text{ and }d^{\rm I}\in\cB\},$ and note that $|\widehat\cD|=O(\frac{n^2P^3(n)}{\epsilon^2})$.

We first present a  $(1,1+3\epsilon)$-approximation algorithm ({\sc MultiCKP-biFPTAS}) for \textsc{MultiCKP$[0,\pi\mbox{-}\varepsilon]$}.
Let $\cN_+\triangleq \{k \in \cN \mid d^{\rm R} \ge 0~\forall d\in D_k\}$ and $\cN_-\triangleq\{k \in \cN \mid d^{\rm R} < 0~\forall d\in D_k\}$ be the subsets of users with demand sets in the first and second quadrants respectively (recall that we restrict users' demand sets to allow such a partition).

The basic idea of Algorithm {\sc MultiCKP-biFPTAS} is to enumerate the guessed total projections on real and imaginary axes for $S_+^\ast$ and $S_-^\ast$ respectively. 
We then solve two separate {\sc Multi-2DKP} problems (one for each quadrant) to find subsets of demands that satisfy the individual guessed total projections. But since {\sc Multi-2DKP} is generally NP-hard, we need to round the demands to get a problem that can be solved efficiently by dynamic programming. We note that the violation of the optimal solution to the rounded problem w.r.t. to the original problem is small in $\epsilon$. 
\begin{lemma}
For any optimal allocation $\bd^*=(d_1^*,\ldots,d_n^*)$ to \textsc{MultiCKP $[0,\pi\mbox{-}\varepsilon]$}, we have $\big | \sum_{k} \widehat d_k^*\big| \le  ( 1 + 2\epsilon)C$.
 \label{lem-trunc}
\end{lemma}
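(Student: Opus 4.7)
The plan is to show that the rounding \raf{eq:truc} changes each component of every demand by at most $L$, then aggregate and use feasibility of $\bd^*$ to bound the magnitude.

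First I would observe, directly from the definition \raf{eq:truc}, that for every $k$,
\[
|\widehat d_k^{\rm R} - d_k^{\rm R}|\le L \quad\text{and}\quad |\widehat d_k^{\rm I} - d_k^{\rm I}|\le L,
\]
regardless of the sign of $d_k^{\rm R}$ (if $d_k^{\rm R}\ge 0$ we round up, and if $d_k^{\rm R}<0$ we round down, so in both cases the absolute error is at most $L$; the imaginary part is always rounded up and the error is likewise bounded by $L$). Summing over the $n$ users and applying the triangle inequality to the real and imaginary parts separately gives
\[
\Bigl|\sum_k \widehat d_k^{{\rm R},*}\Bigr| \le \Bigl|\sum_k d_k^{{\rm R},*}\Bigr| + nL, \qquad \Bigl|\sum_k \widehat d_k^{{\rm I},*}\Bigr| \le \Bigl|\sum_k d_k^{{\rm I},*}\Bigr| + nL.
\]

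Next, I would set $A\triangleq|\sum_k d_k^{{\rm R},*}|$ and $B\triangleq|\sum_k d_k^{{\rm I},*}|$. Feasibility of $\bd^*$ yields $A^2+B^2\le C^2$, and in particular $A+B\le\sqrt{2(A^2+B^2)}\le\sqrt{2}\,C$. Therefore,
\[
\Bigl|\sum_k \widehat d_k^{*}\Bigr|^2 \le (A+nL)^2+(B+nL)^2 = A^2+B^2+2nL(A+B)+2(nL)^2 \le C^2 + 2\sqrt{2}\,nLC + 2(nL)^2.
\]

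Finally, the key quantitative step is to plug in $L=\frac{\epsilon C}{n(P(n)+1)}$, which gives $nL=\frac{\epsilon C}{P(n)+1}\le \frac{\epsilon C}{2}$ since $P(n)\ge 1$. Substituting yields
\[
\Bigl|\sum_k \widehat d_k^{*}\Bigr|^2 \le C^2\Bigl(1+\sqrt{2}\,\epsilon+\tfrac{\epsilon^2}{2}\Bigr) \le C^2(1+4\epsilon+4\epsilon^2)=(1+2\epsilon)^2 C^2,
\]
from which the claim $|\sum_k \widehat d_k^*|\le(1+2\epsilon)C$ follows by taking square roots. There is no real obstacle here beyond bookkeeping; the only subtlety worth stating explicitly is that the sign-dependent definition of $\widehat d^{\rm R}$ in \raf{eq:truc} is exactly what ensures the per-coordinate rounding error stays within $L$ (for positive real parts we could have overshot, and for negative real parts undershot, but both directions only enlarge the relevant magnitudes, which is precisely the worst case captured by the triangle inequality above).
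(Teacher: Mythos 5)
Your proof is correct and is essentially the intended argument: the per-coordinate rounding error is at most $L$, so the aggregate real and imaginary projections shift by at most $nL\le\epsilon C/2$ each, and feasibility of $\bd^*$ plus the choice of $L$ gives the $(1+2\epsilon)C$ bound. As a minor simplification, you could bound $|\widehat d_k^*-d_k^*|\le\sqrt{2}\,L$ per demand and apply the triangle inequality directly in $\CC$, which yields $\bigl|\sum_k\widehat d_k^*\bigr|\le C+\sqrt{2}\,nL\le(1+\epsilon)C$ without squaring, but this changes nothing substantive.
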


The next step is to solve the each rounded instance exactly. Assume an arbitrary order on $\cN = \{ 1, ..., n\}$. We define a 3D table, with each entry ${U}(k,c_1, c_2)$ being the maximum utility obtained from a subset of users $\{1,2,\dots,k\} \subseteq \cN$, each  with choosing from $\widehat{\cD}$, that can fit exactly (i.e., satisfies the capacity constraint as an equation) within capacity $c_1$ on the real axis and $c_2$ on the imaginary axis. 
This table can be filled-up by standard dynamic programming; we denote such a program by {\sc Multi-2DKP-Exact}$[\cdot]$. For a user $k\in\cN_-$, we redefine the valuation as
$\bar v_k(d)=v_k(\bar d)$, where, for $d\in\cD$, $\bar d^{\rm R}=-d^{\rm R}$ and $\bar d^{\rm I}=d^{\rm I}$. For a set $F\subseteq\cD$, we write $\bar F$ for the set 
$\{\bar d:~d\in F \}$.

{\small
\begin{algorithm}[!htb]\label{MultiCKP-biFPTAS}
\caption{{\sc MultiCKP-biFPTAS} $( \{v_k,D_k\}_{k \in \cN}, C,\epsilon) $}
\begin{algorithmic}[1]
\Require Users' multi-minded valuations $\{v_k,D_k\}_{k\in \cN}$; capacity $C$; accuracy parameter $\epsilon$
\Ensure {\small $(1,1+3\epsilon)$-allocation $(\widetilde{d}_1,\ldots,\widetilde d_n)$ to \textsc{MultiCKP$[0,\pi\mbox{-}\varepsilon]$}}
\State $({d}_1,\ldots,d_n) \leftarrow (\bzero,\ldots,\bzero)$
\State $\widehat{\cD}_+ \leftarrow \{ \frac{d}{L} \in \cD :~ d^{\rm R}\in\cA_+ \text{ and }d^{\rm I}\in\cB\}$ 
\State $\widehat{\cD}_- \leftarrow \{ \frac{d}{L} \in \cD :~ -d^{\rm R}\in\cA_- \text{ and }d^{\rm I}\in\cB\}$
\ForAll {$\xi_+ \in {\cal A}_+, \xi_- \in {\cal A}_-, \zeta_+, \zeta_- \in {\cal B}$}
\If {$(\xi_+ - \xi_-)^2 + (\zeta_+ + \zeta_-)^2 \le (1+2\epsilon)^2C^2$}\label{cond1}
\State {\small $F_+ \leftarrow \text{\sc Multi-2DKP-Exact}(\{v_k,D_k\}_{k\in\cN_+}, \frac{\xi_+}{L},\frac{\zeta_+}{L},\widehat\cD)$}
\State {\small $F_- \leftarrow \text{\sc Multi-2DKP-Exact}(\{\bar v_k,D_k\}_{k\in \cN_-}, \frac{\xi_-}{L},\frac{\zeta_-}{L},\widehat\cD)$} 
\State $(d_1',\ldots,d_n')\leftarrow F_+ \cup \overline F_-$ 
\If{$\sum_kv_k(d'_k) > \sum_kv_k(d_k)$}
\State $(d_1,\ldots,d_n)\leftarrow  (d_1',\ldots,d_n')$
\EndIf 
\EndIf
\EndFor
\ForAll{$k\in\cN_+$}
\State Choose $\widetilde d_k\in D_k$ s.t. $\widetilde d_k\preceq d_k$ and 
$v_k(d_k)=v_k(\widetilde d_k)$
\EndFor
\State \Return $(\widetilde d_1,\ldots,\widetilde d_n)$
\end{algorithmic}
\end{algorithm}
}

The following lemma states that the allocation returned by {\sc MultiCKP-biFPTAS} does not violate the capacity constraint by more than a factor of $1+3\epsilon$.
\begin{lemma}\label{lem-trunc2}
Let $\widetilde \bd$ be the allocation returned by {\sc MultiCKP-biFPTAS}. Then $|\sum_{k}\widetilde d_k|\le(1+3\epsilon) C$. 
\end{lemma}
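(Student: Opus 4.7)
My plan is to decompose $\sum_k\widetilde d_k$ via the triangle inequality into the intermediate grid-aligned allocation $\bd=F_+\cup\bar F_-$ plus a small unrounding correction, and bound each piece separately. For whichever guess $(\xi_+,\xi_-,\zeta_+,\zeta_-)$ eventually furnishes the algorithm's output, the exactness condition inside {\sc Multi-2DKP-Exact} forces $\sum_{k\in\cN_+}d_k=\xi_+ + {\bf i}\zeta_+$ and $\sum_{k\in\cN_-}d_k=-\xi_- + {\bf i}\zeta_-$, so that $\sum_k d_k=(\xi_+ - \xi_-)+{\bf i}(\zeta_+ + \zeta_-)$; the acceptance check on line~\ref{cond1} then yields $\bigl|\sum_k d_k\bigr|\leq (1+2\epsilon)C$ immediately.

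Next I would bound the unrounding error per user. By a standard exchange argument, the DP inside {\sc Multi-2DKP-Exact} has an optimal solution in which each selected $d_k$ is of the form $\widehat f$ for some $f\in D_k$: replacing any $d\in\widehat\cD$ assigned to user $k$ with $\widehat{\widetilde d_k}$ (where $\widetilde d_k\in D_k$ witnesses $v_k(d)=v_k(\widetilde d_k)$) weakly decreases the capacity used and keeps the valuation unchanged, so we may assume $d_k=\widehat{\widetilde d_k}$. The rounding rule~\raf{eq:truc} then guarantees that both $|d_k^{\rm R}-\widetilde d_k^{\rm R}|$ and $|d_k^{\rm I}-\widetilde d_k^{\rm I}|$ lie in $[0,L]$ (with the appropriate sign for $k\in\cN_-$). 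Summing over at most $n$ users,
\[
\Bigl|\sum_k(d_k-\widetilde d_k)\Bigr|\leq\sqrt{(nL)^2+(nL)^2}=\sqrt{2}\,nL=\frac{\sqrt{2}\,\epsilon C}{P(n)+1}\leq\epsilon C,
\]
using $P(n)\geq 1$ so $P(n)+1\geq\sqrt{2}$. Combining with the previous bound by the triangle inequality then gives $\bigl|\sum_k\widetilde d_k\bigr|\leq (1+2\epsilon)C+\epsilon C=(1+3\epsilon)C$, as required.

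The main obstacle I anticipate is justifying the exchange argument above, namely that we may assume $d_k=\widehat{\widetilde d_k}$ without loss. If the DP were instead permitted to pick $d_k$ arbitrarily in $\widehat\cD$ and score it via the multi-minded extension~\raf{mm-val}, then $\widetilde d_k$ could be incomparable in $\preceq$ to any $f\in D_k$ whose rounding equals $d_k$, and $|d_k-\widetilde d_k|$ need not be $O(L)$. The fix is to observe that $\widehat{\widetilde d_k}\preceq d_k$ (since $d_k$ is a grid point dominating $\widetilde d_k$) and that $v_k(\widehat{\widetilde d_k})\geq v_k(\widetilde d_k)=v_k(d_k)$ by monotonicity and~\raf{mm-val}; thus substituting $d_k$ with $\widehat{\widetilde d_k}$ preserves welfare while only tightening the total capacity usage, so we can re-express the DP output in a canonical form for which the per-coordinate error bound holds.
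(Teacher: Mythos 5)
Your overall decomposition is the natural one and, as far as one can tell, the one the paper intends (its own proof of this lemma is deferred to the extended version): bound the grid-point total $\sum_k d_k=(\xi_+-\xi_-)+{\bf i}(\zeta_++\zeta_-)$ by $(1+2\epsilon)C$ via the exact-fit constraint and the acceptance test in line~\ref{cond1}, then charge $\sqrt{2}\,nL\le\epsilon C$ for unrounding; your arithmetic for the second piece is correct. The gap is exactly where you suspected it: the claim that one ``may assume'' $d_k=\widehat{\widetilde d_k}$. First, \textsc{Multi-2DKP-Exact} requires the projections to be met \emph{as equalities}, so replacing $d_k$ by $\widehat{\widetilde d_k}\preceq d_k$ yields an allocation that is infeasible for the very DP cell $(\xi_+/L,\zeta_+/L)$ under discussion; the canonical allocation lives at a different guess $(\xi_+',\xi_-',\zeta_+',\zeta_-')$ with $\xi_\pm'\le\xi_\pm$, and nothing forces that guess to pass line~\ref{cond1}, since $(\xi_+'-\xi_-')^2$ is not controlled by $(\xi_+-\xi_-)^2$ when $\xi_+$ and $\xi_-$ shrink by different amounts. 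Second, even ignoring exactness, ``there exists a canonical optimal DP solution'' says nothing about the solution the DP actually returns, and the lemma is a statement about the allocation the algorithm actually outputs.

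This is not cosmetic, because your fallback in the last paragraph --- use only $\widehat{\widetilde d_k}\preceq d_k$ and monotonicity --- does not suffice: componentwise domination across two quadrants does not control the magnitude of a \emph{signed} sum. Concretely, a guess with $\xi_+=\xi_-=\Theta(P(n))\cdot C$ (both lie in the permitted ranges $\cA_+,\cA_-$) passes line~\ref{cond1} because the real parts cancel; the DP may then pad users in $\cN_-$ with grid points whose real parts are far larger in magnitude than anything in their declared sets $D_k$, while the multi-minded extension \raf{mm-val} still credits them full value. The witnesses $\widetilde d_k$ then fail to reproduce the cancellation, and $\bigl|\sum_k\widetilde d_k^{\rm R}\bigr|$ can be of order $P(n)\cdot C$ rather than $(1+3\epsilon)C$. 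What is missing is an argument that, for the winning guess, the selected grid points are within $O(L)$ of $\widehat{\widetilde d_k}$ \emph{per user} (e.g., by restricting the DP to roundings of declared demands, or by a tie-breaking/minimality rule in the DP) --- neither the algorithm as written nor your exchange argument enforces this, so the per-coordinate error bound $|d_k^{\rm R}-\widetilde d_k^{\rm R}|\le L$ that your triangle inequality relies on is not established.
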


\begin{theorem}
For any $\epsilon>0$, there is a truthful for {\sc MultiCKP$[0,\pi \mbox{-} \varepsilon]$}, that returns a $(1,1+3\epsilon)$-approximation. The running time is polynomial in $n$ and $\frac{1}{\epsilon}$.
\end{theorem}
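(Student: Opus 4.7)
The plan is to turn Algorithm~\ref{MultiCKP-biFPTAS} into a truthful mechanism via the Maximal-in-Range (MIR) + VCG framework, following the template already used in Section~\ref{sec:Tm-mC-KS} and Theorem~\ref{t-TIE-CKP}. Concretely, I would exhibit a declaration-independent range $\cR\subseteq\cD^n$ such that {\sc MultiCKP-biFPTAS} returns an exact maximizer of $\sum_k v_k(d_k)$ over $\cR$, and such that $\max_{\bd\in\cR}v(\bd)\geq \OPT$.

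First I would define the range. Let $\widehat{\cD}_0\triangleq L\cdot(\widehat{\cD}_+\cup\widehat{\cD}_-)\cup\{\bzero\}$ denote the lattice of rounded demands, which depends only on the public constants $n$, $C$, $P(n)$, $\epsilon$, and \emph{not} on any user's declared $(v_k,D_k)$. Put
\begin{equation*}
\cR\triangleq\Bigl\{(d_1,\ldots,d_n)\in\widehat{\cD}_0^{\,n}\;:\;\bigl|\sum_{k\in\cN}d_k\bigr|\le(1+3\epsilon)C\Bigr\}.
\end{equation*}
Valuations on $\cR$ are evaluated via the multi-minded extension \raf{mm-val}, which is well-defined on all of $\cD$. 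By construction $\cR$ is declaration-independent.

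Next I would verify that {\sc MultiCKP-biFPTAS} produces an exact optimum over $\cR$. Every allocation in $\cR$ has total projections that are integer multiples of $L$ lying in ${\cal A}_+\times{\cal A}_-\times{\cal B}\times{\cal B}$, so the enumeration of $(\xi_+,\xi_-,\zeta_+,\zeta_-)$ together with the magnitude test on line~\ref{cond1} exhausts every realizable choice of total projections in $\cR$. For each such guess, invoking {\sc Multi-2DKP-Exact} on each quadrant separately returns, by standard DP correctness, the maximum-value selection of lattice points with exactly the prescribed per-coordinate totals; combining yields an allocation achieving $\argmax_{\bd\in\cR}\sum_k v_k(d_k)$. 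The subsequent replacement of each $d_k$ by $\widetilde d_k\in D_k$ with $\widetilde d_k\preceq d_k$ preserves the value (by \raf{mm-val}) and can only decrease $|\sum_k d_k|$, so it is harmless. Social efficiency follows since, for any optimal $\bd^*$ to \textsc{MultiCKP}$[0,\pi\mbox{-}\varepsilon]$, the rounded allocation $\widehat{\bd}=(\widehat d_k^*)$ from \raf{eq:truc} satisfies $\widehat d_k^*\succeq d_k^*$, hence $v(\widehat\bd)\ge\OPT$ by monotonicity, and $|\sum_k \widehat d_k^*|\le(1+2\epsilon)C\le(1+3\epsilon)C$ by Lemma~\ref{lem-trunc}, so $\widehat\bd\in\cR$ and $\max_{\bd\in\cR}v(\bd)\ge\OPT$.

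Coupling this MIR allocation rule with the corresponding VCG payments (computed with respect to $\cR$) yields a truthful mechanism by the standard argument in Definition~\ref{d5}, with total value at least $\OPT$ and capacity violation at most $(1+3\epsilon)$. Running time is polynomial because $|\widehat{\cD}_0|=O(n^2 P(n)^3/\epsilon^2)$, the enumeration of projections contributes only a polynomial factor, and {\sc Multi-2DKP-Exact} runs in time polynomial in $n$, $1/\epsilon$, and $|\widehat\cD_0|$. The main subtlety to handle carefully is the interaction between the two quadrant DPs: the rounding rule \raf{eq:truc} pushes the real part of each $d\in\cN_-$ away from zero, so $\widehat d\succeq d$ is still satisfied, but one must verify that merging the two quadrant-wise lattice solutions with total projections $(\xi_\pm,\zeta_\pm)$ yields an allocation of magnitude at most $(1+3\epsilon)C$; this is exactly what Lemma~\ref{lem-trunc2} establishes and is the point where the extra $\epsilon$ slack over Lemma~\ref{lem-trunc} is consumed.
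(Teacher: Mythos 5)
Your overall strategy is exactly the paper's: treat {\sc MultiCKP-biFPTAS} as an MIR allocation rule over a declaration-independent range of rounded allocations, get social efficiency from Lemma~\ref{lem-trunc}, get the $(1+3\epsilon)$ capacity violation from Lemma~\ref{lem-trunc2}, and attach VCG payments. However, two of your concrete claims are wrong, and the first one undermines the truthfulness argument as you have written it.

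First, your range $\cR$ is cut off at $\bigl|\sum_k d_k\bigr|\le(1+3\epsilon)C$, but the algorithm only enumerates projection tuples passing the test $(\xi_+-\xi_-)^2+(\zeta_++\zeta_-)^2\le(1+2\epsilon)^2C^2$ in Step~\ref{cond1}. Lattice allocations whose total magnitude lies in $\bigl((1+2\epsilon)C,(1+3\epsilon)C\bigr]$ belong to your $\cR$ but are never examined, so the algorithm is \emph{not} a maximizer over your $\cR$, and an MIR rule that fails to maximize over its declared range loses truthfulness. The fix is simply to define the range with the $(1+2\epsilon)C$ threshold (as the paper does, via the union of the sets $\cS_{\xi_+,\xi_-,\zeta_+,\zeta_-}$ over tuples passing Step~\ref{cond1}); Lemma~\ref{lem-trunc} still places the rounded optimum inside it. Second, your assertion that replacing each $d_k$ by $\widetilde d_k\preceq d_k$ ``can only decrease $|\sum_k d_k|$'' is false precisely in the regime this theorem addresses: for a user in $\cN_-$, $\widetilde d_k\preceq d_k$ means $|\widetilde d_k^{\rm R}|\le|d_k^{\rm R}|$ with $d_k^{\rm R}<0$, so the total real part moves to the \emph{right} and the magnitude of the aggregate can grow. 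This un-rounding step --- not the merging of the two quadrant DPs, whose combined magnitude is at most $(1+2\epsilon)C$ directly by Step~\ref{cond1} --- is where Lemma~\ref{lem-trunc2} is actually needed and where the final $\epsilon$ of slack is spent; your closing sentence gestures at the right lemma but misattributes its role, and it contradicts the earlier ``can only decrease'' claim.
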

\begin{proof}
We define a declaration-independent range $\cS$ as follows. 
For $\xi_+ \in {\cal A}_+, \xi_- \in {\cal A}_-, \zeta_+, \zeta_- \in {\cal B}$, define
\begin{align*}
\cS_{\xi_+,\xi_+,\zeta-+,\zeta_-}&\triangleq\{\bd=(d_1,\ldots,d_n)\in \widehat\cD_+^n:\\
&~~~~\sum_{k\in\cN_+}d_k^{\rm R}=\xi_+,~\sum_{k\in\cN_+}d_k^{\rm I}=\zeta_+,\\
&~~~~-\sum_{k\in\cN_-}d_k^{\rm R}=\xi_-,~\sum_{k\in\cN_+}d_k^{\rm R}=\zeta_-\}.
\end{align*}
Define further 
\begin{align*}
\cS\triangleq\bigcup_{(\xi_+ - \xi_-)^2 + (\zeta_+ + \zeta_-)^2 \le (1+2\epsilon)^2C^2}\cS_{\xi_+,\xi_+,\zeta-+,\zeta_-}.
\end{align*}
Using Algorithm {\sc MultiCKP-biFPTAS}, we can optimize over $\cS$ in time polynomial in $n$ and $\frac{1}{\epsilon}$.  
Thus, it remains only to argue that the algorithm returns a $(1,1+3\epsilon)$-approximation w.r.t. the original range $\cD^n$. 
To see this, let $d_1^*,\ldots,d_n^*\in\cD$ be the demands allocated in the optimum solution to {\sc MultiCKP}, and $\widetilde d_1,\ldots,\widetilde d_n\in\cD$ be the demands allocated by {\sc MultiCKP-biFPTAS}. Then by Lemma~\ref{lem-trunc}, the truncated optimal allocation $(\widehat d_1^*,\ldots,\widehat d_n^*)$ is feasible with respect to a capacity of $(1+2\epsilon)C$, and thus its projections will satisfy the condition in Step~\ref{cond1} of Algorithm~\ref{MultiCKP-biFPTAS}. It follows that $v(\widetilde\bd) \ge v(\widehat\bd^*)\ge v(\bd^*)=\OPT$, where the second inequality follows from the way we round demands~\raf{eq:truc} and the monotonicity of the valuations. Finally, the fact that the solution returned by{\sc MultiCKP-biFPTAS} violates the capacity constraint by a factor of at most $(1+3\epsilon)$ follows readily from Lemma~\ref{lem-trunc2}. 
\end{proof}

\newpage
\section{Conclusion}

In this paper, we provided truthful mechanisms for an important variant of the knapsack problem with complex-valued demands. We gave a truthful PTAS when all demand sets of users lie in the positive quadrant, and a bi-criteria truthful FPTAS when some of the demand sets can lie in the second quadrant. In the full version of the paper, we show that these are essentially the {\it best possible} results in terms of approximation guarantees, assuming P$\neq$NP. 
\section*{Acknowledgment} We thank Piotr Krysta for informing us about the results in \cite{DN10,KTV13} (presented in Section~\ref{sec:Tm-mC-KS}).

\bibliographystyle{plain}
\bibliography{reference}

\end{document}